\documentclass[11pt]{article}
\usepackage{graphicx,hyperref}
\usepackage{amssymb,amsmath,amsthm}
\usepackage{fullpage}
\usepackage[left]{lineno}
\usepackage[]{hyperref}
\newcommand{\comment}[1]{} 

\newtheorem{theorem}{Theorem}
\newtheorem{remark}[theorem]{Remark}
\newtheorem{lemma}[theorem]{Lemma}
\newtheorem{corollary}[theorem]{Corollary}

\theoremstyle{definition}
\newtheorem{definition}[theorem]{Definition}


\usepackage{microtype} 
\usepackage{color}


\newtheorem{problem}[theorem]{Problem}

\newcommand\blfootnote[1]{%
  \begingroup
  \renewcommand\thefootnote{}\footnote{#1}%
  \addtocounter{footnote}{-1}%
  \endgroup
}

\title{Scaling and compressing melodies using geometric similarity measures}

\author{L.E. Caraballo
\thanks{Department of Applied Mathematics II, University of Seville, Camino de los descubrimientos s/n, 41092 Seville, Spain. Email: varo.caraballo87@gmail.com, \{dbanez, frodriguex, vscanales, iventura\}@us.es}
\and
J.M. D\'iaz-B\'a\~nez $^*$
\and F. Rodr\'iguez $^*$
\and V. S\'anchez-Canales $^*$
\and I. Ventura $*$\thanks{Corresponding author.}
}

\bibliographystyle{elsarticle-num}


\begin{document}

\maketitle

\begin{abstract}

Melodic similarity measurement is of key importance in music information retrieval. In this paper, we use geometric matching techniques to measure the similarity between two melodies. We represent music as sets of points or sets of horizontal
line segments in the Euclidean plane and propose efficient algorithms for optimization problems inspired in two operations on melodies; linear scaling and audio compression. In the scaling problem, an incoming query melody is scaled forward until the similarity measure between the query and a reference melody is minimized. The compression problem asks for a subset of notes of a given melody such that the matching cost between the selected notes and the reference melody is minimized.

	\vspace{.5cm}
{\bf Keywords:} Melodic similarity,   geometric   matching, algorithm, scaling, compressing.

\blfootnote{\begin{minipage}[l]{0.3\columnwidth} \vspace{-3mm}\hspace{-0.41cm} \includegraphics[trim=10cm 6cm 10cm 5cm,clip,scale=0.15]{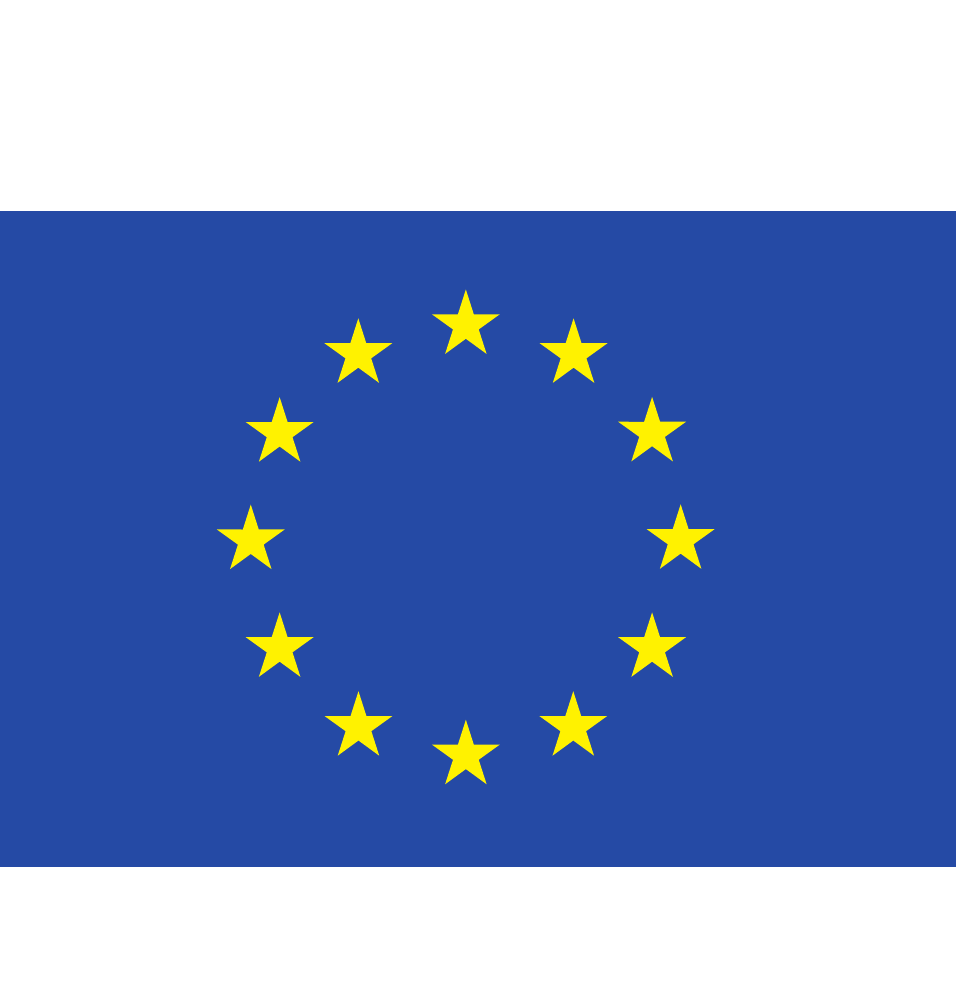} \end{minipage}
\hspace{-3.cm} \begin{minipage}[l][1cm]{0.88\columnwidth} 
		\vspace{-1mm}
		This research has been
 supported by projects MTM2016-76272-R of the Spanish Ministry of Economy and Competitiveness  (AEI/FEDER,UE), PID2020-114154RB-I00 of the  Ministry of Science and Innovation (CIN/AEI/10.13039/501100011033/)
  and the European Union's Horizon 2020 research and innovation programme under the Marie Sk\l{}odowska-Curie grant agreement No.\ 734922.
Research of V. S\'anchez-Canales was partially supported by
 San Pablo  Andalucia CEU Foundation.
\end{minipage}}

   \end{abstract}

\section{Introduction}

Musical information retrieval (MIR)
is a flourishing
field that takes advantage of
techniques developed in computer science and mathematics.
With the increased
availability of large musical data comes the
need to design methods for searching and analysing this
information. Musicological and computational studies on rhythmic and melodic similarity have given rise to a number of geometric problems \cite{toussaint2010computational,barba2016asymmetric,bereg2019computing}.
A melody can be codified as a consecutive sequence of musical notes and each note can be represented by a point (point representation) \cite{clifford2006} or a horizontal segment (segment representation) with a time/pitch value \cite{maidin1998}.
In this paper we study two problems that arise in MIR;  linear scaling and melody compressing.
Linear scaling is used for tempo variation and melody matching \cite{cao2009phrase,jang2001content}, while an important application of data compression is clustering and classification
\cite{cilibrasi2004}.
Given a reference melody, a query melody, and a similarity measure, in the \emph{scaling problem} the incoming query is scaled forward in the horizontal direction
to find the minimum similarity measure between it and the reference melody. The \emph{compression problem} seeks to select $k$ notes of a given melody such that the similarity measure between the reference melody and the simplified melody is minimized.

The main techniques that have been explored in the literature for melodic similarity are based on string matching and geometric measures.
String matching techniques are more efficient but techniques based on geometric similarity measures have a higher
accuracy search rate.

Some strategies based on string matching are longest common subsequence (LCS)-based algorithms, dynamic time warping (DTW) and edit distance (ED)-based methods. The LCS approach represents a global alignment since it does not characterize local information such as gaps or mismatched symbols of the common sequence. To overcome this problem, most LCS-based algorithms use penalty functions \cite{suyoto2008searching}. DTW adapts two sequences non-linearly in the time dimension to calculate the similarity, and it has been extensively explored for audio comparison \cite{byrd2002problems}. The edit distance can be expressed as the minimum number of edit operations to convert one sequence into the other. Edit distance--based methods have been proposed as a music dissimilarity measure \cite{mongeau1990comparison, robine2007music}, and they have been combined with the use of a specific set of representative mid-level melodic features in specific musical repertoires \cite{mora2016}.

The scaling problem in this paper assumes the symbolic melodic representation and measures similarities with a geometric matching.
A closely related problem is to compute the best matching between a short melodic query phrase and a longer reference melody. A geometric technique that considers both pitch and rhythm information was proposed by \cite{maidin1998}:
Each note is represented as a horizontal line segment, so a sequence of notes
can be described as a rectangular contour in a 2D coordinate
system, in which the horizontal and vertical coordinates
correspond to time and pitch value, respectively.
Then the similarity measure between two melodies is the minimum area between them.
The optimal matching is obtained by moving the query segment from left to right and from down to up until the matched area is minimized.
Using a binary search tree, in \cite{aloupis2006} this geometric similarity measure is computed in $O(nm(m+\log n))$ time, where $n$ and $m$ are the lengths of the reference and  the query melodies, respectively.
Later, the approach was improved by using pitch intervals and a branch-and-prune technique to discard the incorrect horizontal positions of the query \cite{lin2008geometric}. Finally, the complexity was improved to $O(nm\log n)$ by
\cite{lin2008efficient} by using a balanced binary search tree.

Audio data compression is a fundamental component of multimedia big data and a challenge addressed in the field of MIR since its inception. In fact, a major innovation that enabled  the explosive  growth  of digitally represented music is audio compression \cite{painter2000perceptual}, which is frequently used to reduce the high data requirements of audio and music.
Early works focused mainly on the MPEG audio compression standard and the extraction of timbral texture features \cite{tzanetakis2000sound}.   More recently, the use of newer sparse representations as  the  basis for  audio  feature  extraction has been  explored, covering timbral texture, rhythmic content and pitch content \cite{ravelli2010audio}. Motivated by this task,  in this paper we have introduced a new theoretical problem that, given a melody with $n$ notes and a positive integer $k<n$, asks for $k$ notes to be selected such that the cost given by a geometric matching is minimized. It is clear that the goal here is to efficiently solve this problem from a geometric optimization point of view and important aspects as psychoacoustics (i.e., the study of sound perception by humans) are not used to guide the process.

\subsection{Problem Statement}

A melody is made up of musical notes, each of which has attributes such as pitch, duration, timbre, loudness, etc., and the melody can be coded in a symbolic representation or an audio wave format.

In this paper, we adopt the symbolic representation, so a melody is represented by a sequence
of notes in a time-pitch plane. Each note can be described by a point in 2D representing the pitch at a specific time (point representation) or a horizontal
line segment, of which the height and width denote
its pitch value and duration, respectively (segment representation). In segment representation, a melody is a consecutive sequence of horizontal segments; that is,
a melody can be seen as a step function of time.
We also call this representation a melodic contour.

In either case, for the sake of key-invariance,  we will use a pitch interval sequence to represent a melody.
Note that with the key-invariance property of this representation, the query melody only needs to move horizontally for the scaling problem.
We assume that
the melodies
are given in
lexicographic order of the starting points of the line segments.

Let $R = (R_1, R_2, \dots , R_n)$ and $Q = (Q_1, Q_2, \dots , Q_m)$ be sequences representing two melodies.
$R$ is the  reference melody from the dataset and $Q$ is the query melody to be matched.
From here on, we suppose that the duration of melody $Q$ is less than that of $R$.
The elements of $R$ and $Q$ can be consecutive horizontal segments or points in 2D depending on the representation to be used.
In the point representation, we assume that the points representing the notes are just the middle points of the horizontal segments in the corresponding segment representation. However, the same approach works if the points correspond to endpoints of the segments.

In the following we introduce two geometric measures to compute similarity and two operations on a melody that are the main ingredients of the problems.

\medskip\noindent\textbf{Area measure:} The region between two melodies with the same duration in the segment representation can be partitioned into rectangles with vertical edges supported by vertical straight lines passing through ending points of the segments. The area between two melodies is defined as the sum of the areas of the rectangular regions of the partition. Since the duration of $Q$ is less than  the duration of $R$, we extend  the last segment of $Q$ so that the durations of  $R$ and $Q$ are the same. See Figure \ref{fig:scaling}a).

\medskip\noindent\textbf{t-Monotone matching  measure:}
Let $R=\{R_i=(x_i,p_i),\, i=1,\dots,n\}$ and $Q=\{Q_j=(t_j,q_j),\, j=1,\dots,m$\} be two melodies in point representation.
In a $t$-monotone matching between $R$ and $Q$, we map each point of  the reference melody $R$ to its $t$-monotone nearest point in $Q$; that is, the left or the right point in time. The unmatched points of $Q$ are associated to their $t$-monotone nearest point in $R$.
More formally, given $R_i=(x_i,p_i)\in R$: if $x_i<t_1$ then $R_i$ is assigned to $Q_1$; if $x_i>t_m$ then $R_i$ is assigned to $Q_m$; if $t_j<x_i<(t_j+t_{j+1})/2$ then $R_i$ is assigned to $Q_j$; if $(t_j+t_{j+1})/2<x_i<t_{j+1}$ then $R_i$ is assigned to $Q_{j+1}$. Finally, if $x_i = (t_j+t_{j+1})/2$ then $R_i$ is assigned to the $l_1$-nearest point among $Q_j$ and $Q_{j+1}$ (with preference for $Q_j$ in the case of a tie), where $l_1$ denotes the Manhattan metric. For unmatched points of $Q$, we follow the analogous rule.
See Figure \ref{fig:scaling}(b). The cost of the note $Q_j$, $\phi(Q_j)$, is given by the sum of the $l_1$-distances between $Q_j$ and its matched points in $R$, and the total cost of the matching is
\[{\phi}(R,Q)=\sum_{Q_j\in Q}\phi(Q_j).\]

\noindent\textbf{\boldmath{$\varepsilon$}-scaling operation:} Consider  a segment representation of two melodies $R$ and $Q$. Let $X = (x_0=0,x_1, x_2, \dots, x_n)$ and $T = (t_0=0,t_1, t_2, \dots, t_m)$ be the time partitions given by the segment representation of $R$ and $Q$, respectively,  with $t_m \leq x_n$.
Given $\varepsilon>0$, we define the $\varepsilon$-\emph{scaling} on the query $Q$, $S_Q({\varepsilon})$, as the operation of increasing the length of each segment of $Q$ by $\varepsilon$ but keeping the starting
point of the first segment of $Q$ static. Thus, after an $\varepsilon$-scaling, $X$ does not change and $T$ is transformed to $T+\varepsilon=(t_0,t_1+\varepsilon, t_2+2\varepsilon, \dots, t_m+m\varepsilon)$.
The query can be scaled until the two melodies have the
same time duration. Thus,
$0\leq \varepsilon\leq \frac{x_n-t_m}{m}$.
Note that the pitches of the notes are unchanged in the scaling operation.
Now, for a point representation of $R$ and $Q$, $R=\{(x_i,p_i), i=1,\dots,n\}$ and $Q=\{(t_j,q_j), j=1,\dots,m$ \}, the $\varepsilon$-\emph{scaling} of $Q$ is given by
\[S_Q({\varepsilon})=\left\{\left(t_1+\frac{\varepsilon}{2},q_1\right), \left(t_2+\frac{3}{2}\varepsilon,q_2\right),\dots, \left(t_m+(m-1)\varepsilon+\frac{\varepsilon}{2},q_m\right)\right\}.\]

Let $t_j(\varepsilon)$ denote the time associated with the $j$-th note of $S_Q(\varepsilon)$. Notice that $t_j(\varepsilon)=t_j+j\varepsilon$ in the segment representation and  $t_j(\varepsilon) = t_j+\frac{2j-1}{2}\varepsilon$ in the point representation.

\medskip\noindent\textbf{\boldmath{$k$}-compressed melody operation:} Given a melody $R$ with $n$ notes in segment representation and a positive integer $k<n$, a $k$-compressed melody or $k$-compression of $R$, $C_k(R)$,
is a melody composed of $k$ segments such that $C_k(R)$
and $R$ have the same duration and each segment of $C_k(R)$
contains at least a segment of $R$. See Figure \ref{fig:compressing}(b). For the point representation, a $k$-compressed melody $C_k(R)$
is a subset of $k$ notes of $R$. See Figure \ref{fig:compressing}(a).

\begin{figure}[h!]
\centering
\includegraphics[scale=0.7]{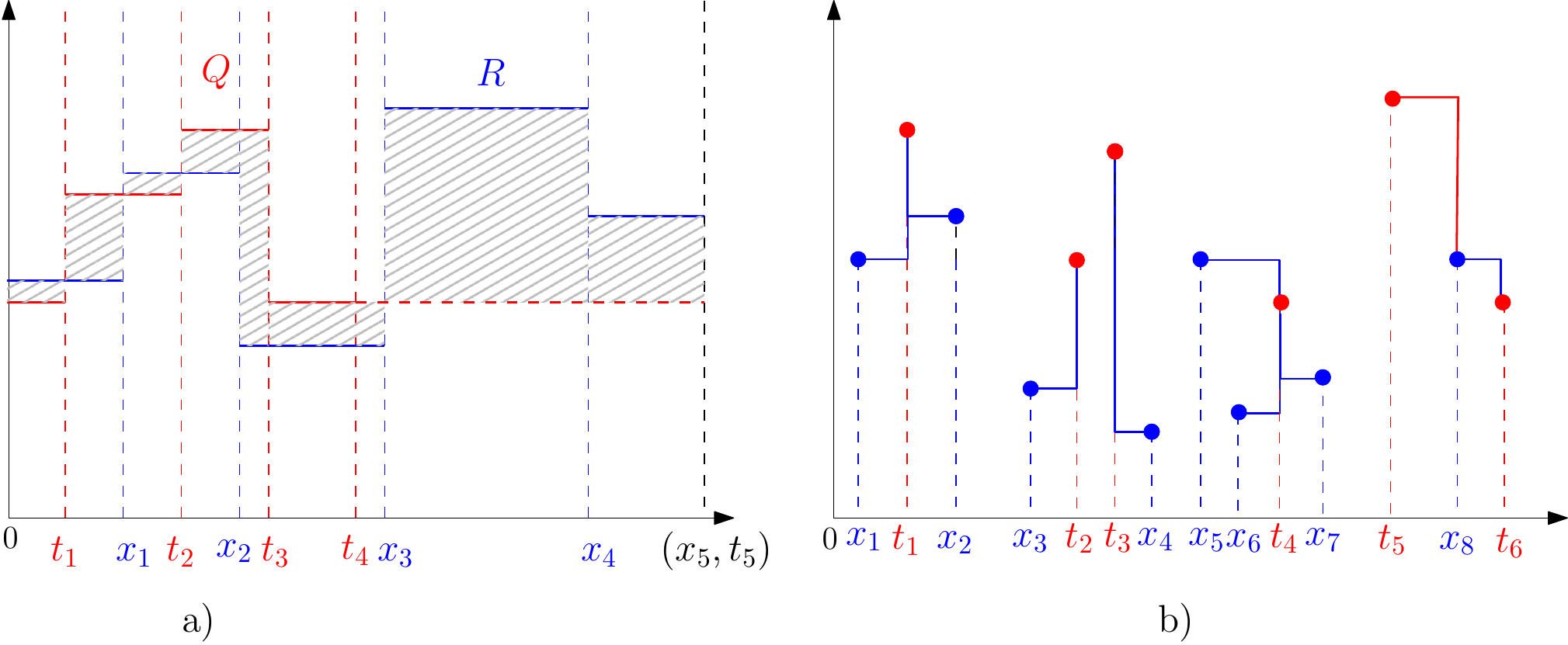}
\caption{a) Area between the reference melody $R$  and the query $Q$, Problem \ref{prob:scaling-area}.  b) $t$-Monotone matching between the reference $R$ (blue) and the query $Q$ (red), Problem \ref{prob:scaling-matching}.}
\label{fig:scaling}
\end{figure}

\begin{figure}[h!]
\centering
\includegraphics[scale=0.7]{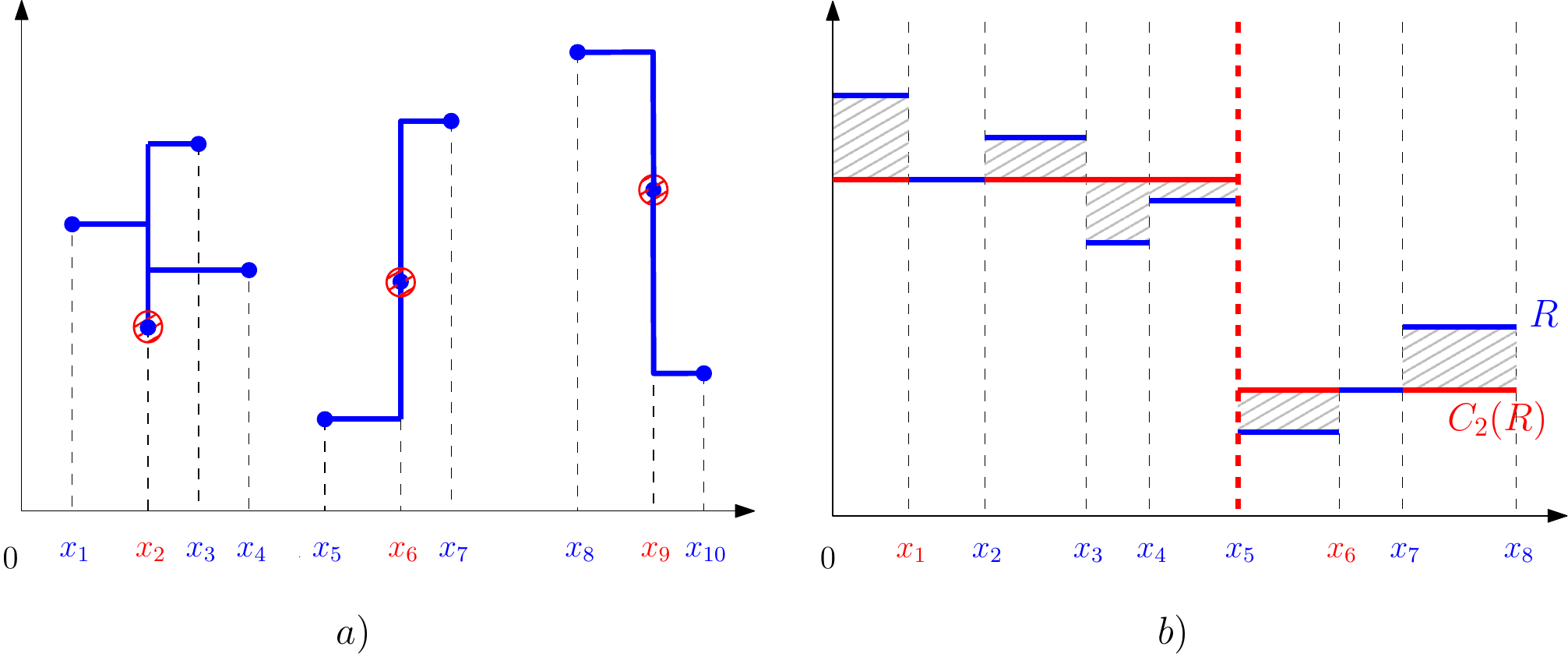}
\caption{a) A $3$-compressed melody from an input of $10$ notes, Problem \ref{prob:compressing-matching}. b) Area between a melody $R$ and a $2$-compressed melody $C_2(R)$,
Problem \ref{prob:compressing-area}. }
\label{fig:compressing}
\end{figure}

	In this paper we study the following optimization problems:
	
	\begin{problem}\label{prob:scaling-area}
	 Given two melodies $R$ and $Q$ in segment representation, compute the value of $\varepsilon>0$ such that the area between $R$ and $S_Q({\varepsilon})$ is minimized.

	\end{problem}

	\begin{problem}\label{prob:scaling-matching}
		 Given two melodies $R$ and $Q$ in point representation, compute the value of $\varepsilon>0$ such that the cost of the $t$-monotone matching between $R$ and $S_Q({\varepsilon})$ is minimized.	\end{problem}

	\begin{problem}\label{prob:compressing-matching}
Given a melody $R$ with $n$ notes in point representation  and  $k<n$, compute a $k$-compressed melody
$C_k(R)$ such that the cost of the $t$-monotone matching between $R$ and
$C_k(R)$ is minimized.
	\end{problem}
	
	\begin{problem}\label{prob:compressing-area}
Given a melody $R$ with $n$ notes in segment representation  and $k$,  $k<n$, compute a $k$-compressed melody
$C_k(R)$ such that the area between $R$ and
$C_k(R)$ is minimized.
	\end{problem}

Note that in the scaling problems,
the reference melody is static whereas
the query melody is dynamic.
However, the compressing problems ask for an optimal selection of the notes in the input melody. Figures \ref{fig:scaling} and \ref{fig:compressing} illustrate the problems.

\subsection{Results and Overview}

In this article, we tackle two geometric problems related to tasks addressed in music information retrieval. Problems 1 and 2 are solved with similar sweeping algorithms in Section \ref{sec:scaling}. Problem 2 requires more insight and is, in fact, a constrained case of what is called many-to-many problem studied in areas such as computational biology \cite{ben2003restriction} and music theory and computational geometry \cite{toussaint2004geometry}. Given two sets of two-dimensional points $R$ and $B$ with total cardinality $N$,
a \emph{many-to-many} matching between $R$ and $B$ pairs each element of $R$ to at least one element of $B$ and vice versa.
Given a cost function $c(r, b)$, with $r \in R$ and $b \in B$, the cost of a matching is defined as the sum of the costs of all matched pairs. The problem is to compute a minimum-cost many-to-many matching.
In this general setting, the problem can be solved in $O(N^3)$ time by reducing it to the minimum-weight perfect matching problem in a bipartite graph~\cite{eiter1997distance}, and no best complexity is known. We study the particular case of a $t$-monotone matching; i.e., the assignment is monotone with respect to the time direction, following the principle that the matching is made between notes that are close in time. We prove that both problems can be solved in $O(nm\log m)$.

In Section \ref{sec:comp}, we address the compression problem in both variants, using  segment and point representation. First, we apply dynamic programming to solve Problem 3, where the monotone nature of the matching is exploited. We show how to solve the problem in $O(kn^2)$ time. It is important to note that Problem 3 is a constrained case of a well known problem in operational research, the $k$-discrete median problem, defined as follows:  Given $n$ points in the plane and an integer $k$, choose $k$ out of  the $n$ input  points as medians such that the sum of the distances from the input points  to  their  nearest medians  is  minimized. This problem is shown to be NP-hard \cite{megiddo1984complexity}.

Using some remarkable properties of a solution of Problem 4, we also use dynamic programming to efficiently solve the compression problem for the segment representation in $O(k\rho n)$ time, where $\rho$ is the number of different pitch values among the notes of $R$. Note that if $\rho<<n$, the algorithm is subquadratic. Finally, in Section \ref{sec:conclu} we draw our conclusions and present some lines for future work.

\section{Scaling Operation}\label{sec:scaling}

\subsection{Area as Similarity Measure}

The area between the melodies $R$ and $Q$ is the sum of $O(m+n)$ rectangles, as illustrated in Figure \ref{fig:scaling}a).
Denote by $A_{RQ}(\varepsilon)$ the area between $R$ and $S_Q(\varepsilon)$ as a function of $\varepsilon$.
Observe that if we scale $Q$ by continuously increasing $\varepsilon$, eventually, for some value of $\varepsilon$, there exist $i$ and $j$ such that $x_i=t_j(\varepsilon)$. At this instant, some rectangles disappear and after that, new rectangles appear. We call this value of $\varepsilon$ an \emph{event}. There are $O(nm)$ events. Also note that between two events, the areas of some rectangles increase, others decrease and others are unaffected.
The type of a rectangle can be
characterised by its vertical edges. Rectangles can be classified into four types: type $C_0$, with vertical edges passing through $x_i$ and $x_{i+1}$; type
 $C_1$, with vertical edges passing through $t_j$ and $t_{j+1}$; type
 $C_2$, with vertical edges passing through $x_i$ and $t_j$ and type
 $C_3$, with vertical edges passing through $t_j$ and $x_i$. Figure \ref{fig:area_events} illustrates an event in which a rectangle of type $C_3$ disappears.

\begin{figure}[h!]
\centering
\includegraphics[scale=0.8]{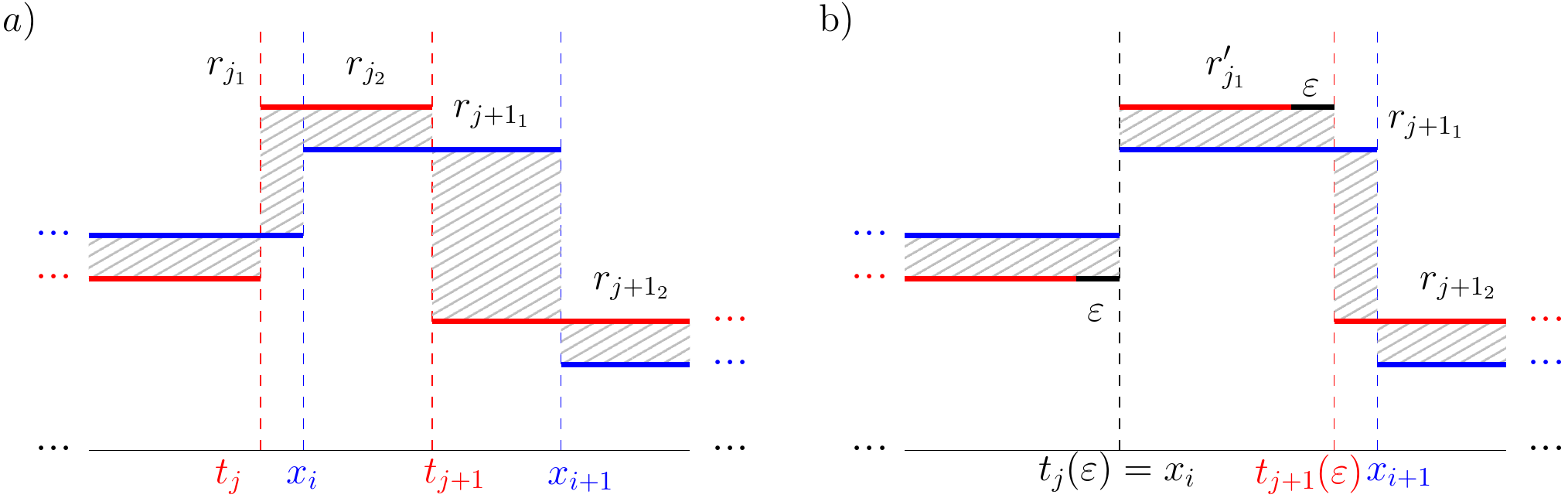}
\caption{a) Rectangles between $R$ and $Q$. b) After the scaling $S_Q(\varepsilon)$, an event is found when $t_{j}$ touches $x_{i}$. Rectangle $r_{j_1}$ disappears and $r_{j_2}$ is renamed to $r'_{j_1}$.
\label{fig:area_events}}
\end{figure}

The following result allows us to discretize the problem:

\begin{lemma}\label{linear}
$A_{RQ}(\varepsilon)$ is a piecewise linear function.
\end{lemma}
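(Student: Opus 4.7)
The plan is to argue that between two consecutive events the combinatorial structure of the rectangular partition is fixed, and that on each such interval every rectangle contributes an affine function of $\varepsilon$, whence the sum is affine on the interval; continuity at event values then yields piecewise linearity globally.

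First I would fix an open interval $(\varepsilon_1,\varepsilon_2)$ of $\varepsilon$-values containing no event. By the definition of event given just above the lemma, on this interval the relative order of the points in the merged sequence $\{x_0,\dots,x_n\}\cup\{t_0(\varepsilon),t_1(\varepsilon),\dots,t_m(\varepsilon)\}$ does not change, so the partition of the horizontal axis induced by these points has a fixed combinatorial type. In particular, the list of rectangles that appear in the area computation, together with the type ($C_0,C_1,C_2,C_3$) of each one, is constant on $(\varepsilon_1,\varepsilon_2)$.

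Next I would verify that the area of each rectangle is affine in $\varepsilon$ on this interval. The height of every rectangle is a difference of pitch values and is therefore independent of $\varepsilon$, so it suffices to check the width. For type $C_0$ both vertical edges come from $R$, giving width $x_{i+1}-x_i$, which is constant. For type $C_1$ both edges come from $Q$, with width
\[
t_{j+1}(\varepsilon)-t_j(\varepsilon)=(t_{j+1}-t_j)+\varepsilon,
\]
which is affine. For types $C_2$ and $C_3$ one edge is $x_i$ (constant) and the other is $t_j(\varepsilon)=t_j+j\varepsilon$, so the width has the form $\pm(x_i-t_j-j\varepsilon)$, again affine in $\varepsilon$. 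Summing these finitely many affine contributions shows that $A_{RQ}(\varepsilon)$ is affine on $(\varepsilon_1,\varepsilon_2)$.

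Finally, I would handle continuity at an event $\varepsilon^{\ast}$, at which some $x_i$ coincides with some $t_j(\varepsilon^{\ast})$. At this instant the rectangles whose combinatorial types change either collapse to zero width (if they are disappearing) or emerge from zero width (if they are appearing), as illustrated in Figure \ref{fig:area_events}; all other rectangles continue smoothly. Hence the one-sided limits of $A_{RQ}$ at $\varepsilon^{\ast}$ agree, and $A_{RQ}$ is a continuous, piecewise affine function of $\varepsilon$ on $[0,(x_n-t_m)/m]$, with breakpoints at the $O(nm)$ events. The only delicate point is the bookkeeping at events where several rectangles change simultaneously, but since each contribution is affine in $\varepsilon$ and the total area is continuous by the zero-width argument, no further work is required.
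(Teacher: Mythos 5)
Your proof is correct and follows essentially the same route as the paper's: between consecutive events the set of rectangles and their types are fixed, heights are constant, and each width is an affine function of $\varepsilon$ with integer slope (constant for $C_0$, slope $1$ for $C_1$, slope $\pm j$ or $\pm(j-1)$ for $C_2$ and $C_3$), so the sum is affine on each interval. You additionally verify continuity at events via the zero-width collapse argument, which the paper only asserts later (in Lemma~\ref{update_dif}); that is a harmless and welcome strengthening, not a divergence.
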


\begin{proof}

Let $r_{j_k}$ be the $k^{th}$ rectangle generated by the $j^{th}$ segment of $Q$ as illustrated in Figure \ref{fig:area_events}. Since only the widths of the rectangles change in the scaling operation, we denote the base and the height of $r_{j_k}$ by $b_{j_k}(\varepsilon)$ and $h_{j_k}$, respectively.
According to the types of rectangles, the area function between two consecutive events can be expressed as:
\begin{equation*} \label{AreaRQ}
A_{RQ}(\varepsilon)=\sum_{r_{j_k}\in C_0}b_{j_k}(\varepsilon) h_{j_k}+\sum_{r_{j_k}\in C_1}b_{j_k}(\varepsilon) h_{j_k}+\sum_{r_{j_k}\in C_2}b_{j_k}(\varepsilon) h_{j_k}+\sum_{r_{j_k}\in C_3}b_{j_k}(\varepsilon) h_{j_k}.
\end{equation*}
Now let $\varepsilon' < \varepsilon''$ be two scaling moves between two consecutive events. Then
\begin{equation} \label{delta_dif}
   A_{RQ}(\varepsilon'') = A_{RQ}(\varepsilon') + (\varepsilon'' - \varepsilon')\left(\sum_{r_{j_k}\in C_1} h_{j_k}+\sum_{r_{j_k}\in C_2}j h_{j_k}-\sum_{r_{j_k}\in C_3}(j-1) h_{j_k}\right).
\end{equation}

Thus the area function is linear between two consecutive events.
\end{proof}

As a consequence of Lemma \ref{linear}, a solution of Problem 1; i.e., the minimum area in the scaling operation, is reached at the events and our approach is to efficiently sweep the events from left to right,
keeping the minimum value of area. The following result shows how to update the area value at each event.

\begin{lemma}\label{update_dif}
$A_{RQ}(\varepsilon)$ can be updated between two consecutive events in O(1) time.
\end{lemma}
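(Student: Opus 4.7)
The plan is to maintain, during a left-to-right sweep over events, two running quantities: the current area value $A_{RQ}(\varepsilon)$ and the slope
$$\sigma(\varepsilon)=\sum_{r_{j_k}\in C_1} h_{j_k}+\sum_{r_{j_k}\in C_2}j\,h_{j_k}-\sum_{r_{j_k}\in C_3}(j-1)\,h_{j_k}$$
of the area function on the current linear piece, i.e., the bracketed expression appearing in equation (\ref{delta_dif}). Given $\sigma(\varepsilon')$ at the current event $\varepsilon'$ and the $\varepsilon$-value $\varepsilon''$ of the next event, the area at $\varepsilon''$ is just $A_{RQ}(\varepsilon'')=A_{RQ}(\varepsilon')+(\varepsilon''-\varepsilon')\,\sigma(\varepsilon')$, which is $O(1)$. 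Hence the task reduces to refreshing $\sigma$ in $O(1)$ time when we step past an event.

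The key observation is that an event corresponds to a single coincidence $t_j(\varepsilon)=x_i$, so only the rectangles whose vertical edges are supported by $x_i$ or by $t_j$ can possibly change status. Inspecting Figure \ref{fig:area_events}, at most a constant number of rectangles are affected at a single event: one rectangle of type $C_3$ collapses to zero width and disappears, and its neighbours switch type (for instance a $C_2$ rectangle becomes a $C_1$ rectangle, a $C_0$ rectangle is broken into a $C_0$ and a $C_2$ rectangle, and so on) or a new rectangle is born on the other side of the crossing. All rectangles that change involve only the indices $i$, $i\pm 1$ and $j$, $j\pm 1$, so their number is $O(1)$.

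For each affected rectangle we remove its previous contribution ($h_{j_k}$, $j\,h_{j_k}$ or $(j-1)\,h_{j_k}$, depending on its old type) from $\sigma$ and add its new contribution; since the heights are available when the event is generated and the indices are known, each correction costs $O(1)$. Combining the updates of $A_{RQ}$ and $\sigma$ gives $O(1)$ per event, which proves the claim.

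The main obstacle is the local case analysis: one has to enumerate carefully the possible reconfigurations around the crossing $t_j(\varepsilon)=x_i$ (according to whether $t_{j-1}$ lies in $[x_{i-1},x_i]$, whether $t_{j+1}$ lies in $[x_i,x_{i+1}]$, and according to the relative pitches of $R$ and $Q$ on each affected subinterval) and check in each case that only $O(1)$ rectangles change type. Once this bounded list of transitions is catalogued, the constant-time update of $\sigma$ — and therefore of the area — is immediate from (\ref{delta_dif}).
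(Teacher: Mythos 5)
Your proof is correct and follows essentially the same approach as the paper: both arguments rest on the observation that an event affects only a constant number of rectangles (one $C_3$ disappears, one $C_2$ appears, and the two neighbouring rectangles change type), so Equation \eqref{delta_dif} yields the new area in $O(1)$ time. You are in fact slightly more explicit than the paper in spelling out that the slope $\sigma$ must be maintained incrementally as a running sum, a detail the paper leaves implicit.
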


\begin{proof}
Let $\varepsilon_i$, $\varepsilon_{i+1}$ be two consecutive events.
$A_{RQ}(\varepsilon)$ is a continuous piecewise linear function whose value at $\varepsilon_{i+1}$ can easily be computed from $A_{RQ}(\varepsilon_i)$ as follows.
Note that an event can be viewed in the abscissa axis as a swap between points $t_j$ and $x_j$ (see Figure \ref{fig:area_events}). Indeed, the points involved in the event are $t_j$, $x_i$ and their neighbors. Thus, the changes at any event are: a type $C_3$ rectangle disappears, a type $C_2$ appears and two neighbouring rectangles change their types. All the other rectangles maintain their type. Thus, Equation \eqref{delta_dif} can be used to compute the area at the new event $\varepsilon_{i+1}$ in $O(1)$ time. 
\end{proof}

We are ready to efficiently solve the problem.

\begin{theorem}
Problem \ref{prob:scaling-area} can be solved in $O(nm\log m$) time.
\end{theorem}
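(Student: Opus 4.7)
The plan is to discretize via events and sweep them in sorted order, invoking Lemmas~\ref{linear} and~\ref{update_dif}. By piecewise linearity the minimum of $A_{RQ}(\varepsilon)$ over the feasible range $[0,(x_n-t_m)/m]$ is attained at some event, and between consecutive events the area updates in $O(1)$ time. Since there are $\Theta(nm)$ events in the worst case, the remaining task is to enumerate them in increasing order of $\varepsilon$ without spending a full $\log(nm)$ per event.

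The key observation driving the $\log m$ factor is that events naturally split into $m$ already--sorted streams, one per $j\in\{1,\dots,m\}$. An event involving segment $j$ occurs at $\varepsilon=(x_i-t_j)/j$, so as $\varepsilon$ grows the moving abscissa $t_j(\varepsilon)=t_j+j\varepsilon$ crosses the fixed values $x_i$ in their left--to--right order. Thus for each $j$ I would keep a pointer to the next $x_i$ that $t_j(\varepsilon)$ has yet to cross, which immediately yields the next candidate event for that stream. I would then merge the $m$ streams with a min--heap of size $m$: extract the smallest candidate, update $A_{RQ}$ using Equation~\eqref{delta_dif}, advance the pointer for the corresponding $j$, and insert the newly exposed candidate (if any). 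This performs one $O(\log m)$ heap operation per event, for a total of $O(nm\log m)$; the sweep halts as soon as the heap top exceeds $(x_n-t_m)/m$.

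To actually realize the $O(1)$ update promised by Lemma~\ref{update_dif} at each event, I would maintain the currently interleaved sequence of abscissae $\{x_i\}\cup\{t_j(\varepsilon)\}$ as a doubly linked list, with back--pointers from each index $i$ and $j$ to its node, and also keep the rectangle labels $C_0,C_1,C_2,C_3$ attached to consecutive--node pairs. An event is exactly a swap of two adjacent nodes, after which only a constant number of rectangles change type (a $C_3$ vanishes, a $C_2$ is born, and the two flanking rectangles are relabelled), so applying Equation~\eqref{delta_dif} to the running area costs $O(1)$ per event. Initialization (sorting $X\cup T$ at $\varepsilon=0$, classifying rectangles, computing $A_{RQ}(0)$, and building the initial heap) costs $O((n+m)\log(n+m))$, which is absorbed in the final bound.

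The main obstacle is the joint bookkeeping: guaranteeing that each heap extraction corresponds to the globally next event, that the adjacency structure after the swap correctly identifies which rectangles change type for Equation~\eqref{delta_dif}, and that degenerate coincidences (multiple points colliding at the same $\varepsilon$) are handled consistently (for instance, by breaking ties lexicographically on $(j,i)$ so the linear pieces agree at the breakpoint). Once these details are pinned down, the algorithm maintains the minimum observed area across all events and returns the witnessing $\varepsilon$, proving the $O(nm\log m)$ bound.
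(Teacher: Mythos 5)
Your proposal is correct and follows essentially the same route as the paper: per-query-segment event streams (already sorted because $t_j(\varepsilon)$ crosses the $x_i$ in left-to-right order), merged through a size-$m$ min-heap, with $O(1)$ area updates at each event via the linearity formula. The extra implementation detail you supply (linked-list adjacency, lazy pointers instead of precomputed lists, tie-breaking) only elaborates what the paper leaves implicit.
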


\begin{proof}
First, for every segment of the query $Q$, we store its events in an ordered list. Each element in the list holds the epsilon value, the height, and the type of the rectangles involved. We have $m$ lists overall. Since the input for $R$ is an ordered sequence of segments, each list can be computed in $O(n)$ time; thus we spend $O(nm)$ time to compute the $m$ lists.

Given the area for $\varepsilon=0$, $A_{RQ}(0)$, we build a min-heap whose nodes are $m$ events, the first of each list. Sweeping with a vertical line from left to right we use the lists to maintain the next event  in the heap
and update the area at each event. The process ends when there are no events in the  heap.
Since updating the area spends $O(1)$ time, by Lemma \ref{update_dif},  the sweeping algorithm spends $O(mn\log m)$ time.
\end{proof}

 \subsection{\emph{t}-Monotone Matching as Similarity Measure}

 Let $R=\{R_i=(x_i,p_i),\,  i=1,\dots,n\}$ and $Q=\{Q_j=(t_j,q_j),\,  j=1,\dots,m$\} be two melodies in point representation. For the sake of simplicity, we denote  the pair $(R_i ,Q_j)$ as $(i,j)$.

 Given $\varepsilon>0$, a $t$-monotone matching between $R$ and $S_Q(\varepsilon)$ can be partitioned into two sets $A^{-}(\varepsilon)$ and $A^{+}(\varepsilon)$ formed by the pairs of matched points $(i,j)$ where $x_i<t_j(\varepsilon)$ and $ x_i\geq t_j(\varepsilon)$, respectively.  The matching in Figure~\ref{fig:scaling}(b) gives rise to sets $A^{-}(\varepsilon)=\{(1,1),(3,2), (5,4), (6,4), (8,6)\}$ and $A^{+}(\varepsilon)=\{(1,2),(3,4), (4,7), (5,8)\}$.
We call the values of $\varepsilon>0$ where the sets $A^{-}$ and $A^{+}$
may change according the $t$-monotone matching definition \emph{events}.
There are three different types of events (see Figure \ref{fig:scaling_events}):
\begin{description}
	\item [Type 1.] $t_j(\varepsilon) = x_i$:  In this case,	the pair $(i,j)$ \emph{disappears} from $A^+$ and \emph{enters} in $A^-$.
	That is, for a small enough value $\delta\varepsilon$:
	\begin{align*}
		A^{+}(\varepsilon ) &= A^{+}(\varepsilon-\delta\varepsilon)\setminus\{(i,j)\}, \text{ and}\\
		A^{-}(\varepsilon) &= A(\varepsilon-\delta\varepsilon)\cup\{(i,j)\}.
	\end{align*}
	\item [Type 2.] $\frac{t_j(\varepsilon)+t_{j+1}(\varepsilon)}{2}=x_i$:  The pair $(i,j+1)$ \emph{disappears} from $A^-$ and $(i,j)$ \emph{enters} in $A^+$.
	That is,
	\begin{align*}
		A^{+}(\varepsilon) &= A^{+}(\varepsilon-\delta\varepsilon)\cup\{(i,j)\}, \text{ and}\\
		A^{-}(\varepsilon) &= A^{-}(\varepsilon - \delta\varepsilon)\setminus\{(i,j+1)\}.
	\end{align*}

	In addition to the changes in $ A^{-} $ and $ A^{+} $, we point out the following:

	\begin{itemize}
		\item If
		$Q_j$ was an unmatched point in $S_Q(\varepsilon-\delta\varepsilon)$; that is, it
		was not  the $t$-monotone nearest point from any $R$,
		then it was
		matched with its $t$-monotone nearest point among $R_{i-1}$ and $R_{i}$.
		If $Q_j$ was matched with $R_i$, then $(i,j)$ is already in $A^+(\varepsilon)$ (i.e.,  $A^+(\varepsilon)=A^{+}(\varepsilon-\delta\varepsilon)$).
		If $Q_j$ was matched with $R_{i-1}$ (i.e., $(i-1,j)\in A^{-}(\varepsilon - \delta\varepsilon)$), then when $(i,j)$ \emph{enters} $A^+(\varepsilon)$ the pair $(i-1,j)$ must \emph{disappear} from $A^-(\varepsilon)$.
		\item Also, note that $Q_{j+1}$ may become unmatched when $(i,j+1)$ \emph{disappears} from $A^{-}(\varepsilon)$ (i.e., $x_{i+1}>t_{j+1}(\varepsilon+\delta\varepsilon)$ and $Q_{j+2}$ is $t$-monotone nearer to $R_{i+1}$ than $Q_{j+1}$). In this case $Q_{j+1}$ must be matched with its $t$-monotone nearest point, $R_i$ or $R_{i+1}$. If $R_i$ is $t$-monotone nearer to $Q_{j+1}$ than $R_{i+1}$, then $(i,j+1)$ must remain in $A^-(\varepsilon)$. Otherwise, $(i+1,j+1)$ must enter $A^+$.
		
	\end{itemize}

	\item [Type 3.] $t_j(\varepsilon)=\frac{x_i+x_{i+1}}{2}$ and in $S_Q(\varepsilon)$, $Q_j$ is not the nearest $t$-coordinate point from any $R$ note (i.e., $Q_{j-1}$ and $Q_{j+1}$ are $t$-monotone nearer to $R_i$ and $R_{i+1}$, respectively, than $Q_j$):
	The pair $(i,j)$ \emph{disappears} from $A^-(\varepsilon)$ and $(i+1,j)$ \emph{enters} in $A^+(\varepsilon)$.
	That is,
	\begin{align*}
		A^{+}(\varepsilon) &= A^{+}(\varepsilon-\delta\varepsilon)\cup\{(i+1,j)\}, \text{ and}\\
		A^{-}(\varepsilon) &= A^{-}(\varepsilon-\Delta\varepsilon)\setminus\{(i,j)\}.
	\end{align*}
\end{description}

\begin{figure}[h!]
	\centering
	\includegraphics[scale=0.8]{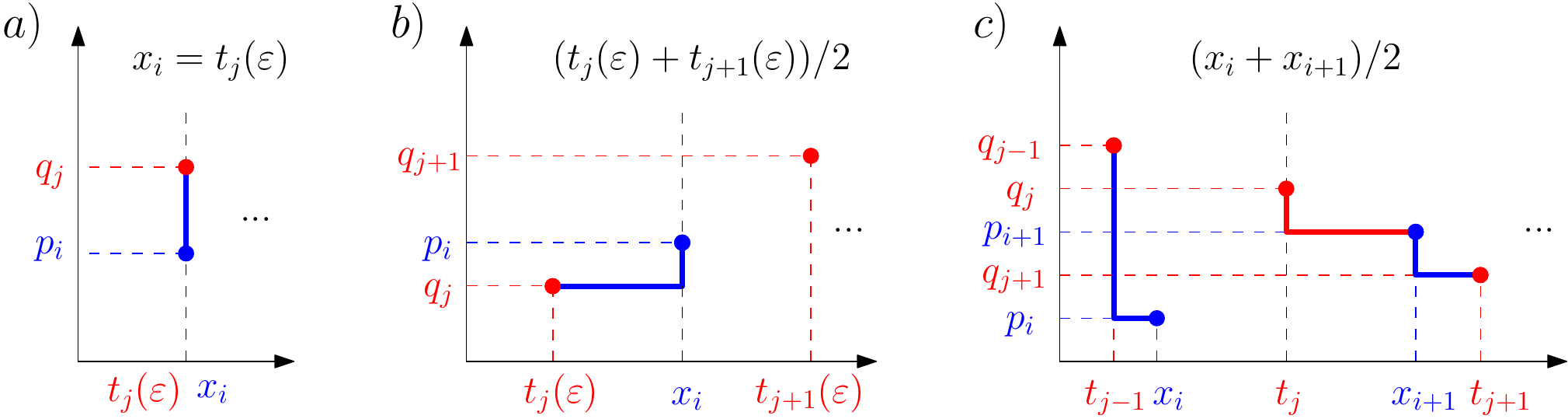}
	\caption{Events of a) type 1, b) type 2, and  c) type 3. }
	\label{fig:scaling_events}
\end{figure}

Note that when no events occur for a small enough value $\delta\varepsilon$, the set of pairs does not change; i.e., $A^{-}(\varepsilon-\delta\varepsilon)=A^{-}(\varepsilon+\delta\varepsilon)$ and $A^{+}(\varepsilon-\delta\varepsilon)=A^{+}(\varepsilon+\delta\varepsilon)$.
Thus, we denote the sets $A^{-}$ and $A^{+}$  between two consecutive events $\varepsilon_{\ell}$ and $\varepsilon_{\ell+1}$,    $A^{-}_\ell$ (resp.\ $A^{+}_\ell$) the set  $A^{-}(\varepsilon)$ (resp.\ $A^{+}(\varepsilon)$), for all $\varepsilon\in(\varepsilon_\ell,\varepsilon_{\ell+1})$.
Moreover, the costs related to pairs in $A_{\ell}^{-}$ (resp.\ $A_{\ell}^{+}$) increase  (resp.\ decrease).

 \begin{lemma}\label{lem:events}
 The ordered sequence of events $\varepsilon_1<\varepsilon_2<\dots<\varepsilon_{\ell}<\varepsilon_{\ell+1}<\cdots$ with the  pairs of disappearing and entering matched points involved can be computed in $O(nm\log m)$ time.
 \end{lemma}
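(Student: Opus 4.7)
The plan is to generate the events by a per-query-note sweep of the reference, then merge the resulting sorted lists with a priority queue. Recall that in point representation $t_j(\varepsilon) = t_j + \tfrac{2j-1}{2}\varepsilon$, which is strictly monotone increasing in $\varepsilon$, so as $\varepsilon$ grows the image of $Q_j$ moves rightward through the time axis at constant speed $\tfrac{2j-1}{2}$. Each of the three event types in the statement is a condition of the form ``a specific linear function of $\varepsilon$ equals a fixed $x$-coordinate (or mid-$x$) of $R$,'' so each has a unique closed-form solution $\varepsilon\ge 0$ (if any).

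First I would, for every $j=1,\dots,m$, walk the reference points $R_1,\dots,R_n$ from left to right and generate the candidate events produced by $Q_j$: the Type~1 event $t_j(\varepsilon)=x_i$; the Type~2 event $\tfrac{t_j(\varepsilon)+t_{j+1}(\varepsilon)}{2}=x_i$ (solving for $\varepsilon$); and the Type~3 event $t_j(\varepsilon)=\tfrac{x_i+x_{i+1}}{2}$. For each such candidate, since $t_j(\varepsilon)$ is a strictly increasing affine function of $\varepsilon$, the values of $\varepsilon$ produced while sweeping $i=1,2,\dots,n$ come out already in non-decreasing order, so the per-$j$ list is sorted essentially for free during the sweep. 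Discard candidates with $\varepsilon<0$ or $\varepsilon$ exceeding the maximum allowed scaling $\tfrac{x_n-t_m}{m}$, and record alongside each surviving event the pairs that enter/disappear from $A^-$ and $A^+$, which by the classification on the previous pages can be read off from a constant number of neighbouring notes in $R$ and $Q$. This sweep costs $O(n)$ per value of $j$, giving $O(nm)$ over all $j$ and producing $m$ sorted lists of total length $O(nm)$.

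Next I would merge these $m$ sorted lists into one global increasing sequence using a standard min-heap of size at most $m$ (one head per list): extract the smallest head, output the corresponding event with its annotation, and push the next element from the same list. Each of the $O(nm)$ events triggers one extraction and one insertion, each costing $O(\log m)$, so the merge takes $O(nm\log m)$ time. Combined with the $O(nm)$ spent generating the lists, the total cost is $O(nm\log m)$, which is the bound claimed.

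The only delicate point is arguing correctness of the event set, namely that every $\varepsilon>0$ at which $A^-$ or $A^+$ actually changes is produced by the sweep and that the annotation is right. This reduces to the case analysis already laid out in the statement: the sets can only change when some $t_j(\varepsilon)$ crosses an $x_i$ (Type~1), a midpoint $(x_i+x_{i+1})/2$ (Type~3), or when a midpoint $(t_j(\varepsilon)+t_{j+1}(\varepsilon))/2$ crosses an $x_i$ (Type~2); the sweep enumerates exactly these coincidences, and the local bookkeeping needed at Type~2 events (checking whether $Q_j$ or $Q_{j+1}$ become matched/unmatched with respect to $R_{i-1},R_i,R_{i+1}$) is $O(1)$ per event. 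Handling ties at a single $\varepsilon$ (events of different types coinciding) is the one step that needs some care, but it is resolved by a fixed tie-breaking rule consistent with the ``preference for $Q_j$ in case of a tie'' convention in the matching definition, and it does not affect the asymptotic bound.
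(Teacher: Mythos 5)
Your proposal is correct and follows essentially the same strategy as the paper: per-query-note sweeps of $R$ producing $O(m)$ sorted lists of total length $O(nm)$ in $O(nm)$ time, with the entering/disappearing pairs read off from a constant number of neighbouring notes, followed by a heap-based merge in $O(nm\log m)$ time. The only cosmetic difference is that the paper generates Type 2 events in separate sweeps anchored at the bisectors of consecutive query notes (giving $2m-1$ lists rather than $m$), which sidesteps the small point that a single per-$j$ list mixing all three event types is not automatically sorted --- $t_j(\varepsilon)$ and the midpoint $\frac{t_j(\varepsilon)+t_{j+1}(\varepsilon)}{2}$ move at different speeds --- but keeping the three types in separate sublists (or doing a three-way merge per $j$) fixes this without affecting the bound.
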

 \begin{proof}

 For every note $Q_j$ in $Q$ we can sweep $R$ from left to right with a vertical line starting at $t_j$ and compute an ordered sequence $\varepsilon_1<\varepsilon_2<\dots$ of scaling events, such that $t_j(\varepsilon_\ell)=x_i$ (type 1 event) or $t_j(\varepsilon_\ell)=(x_i+x_{i+1})/2$ (type 3 event). Notice that for each $\varepsilon_\ell$ scaling event detected when sweeping $R$, we can compute the set of \emph{disappearing} and \emph{entering} pairs by analysing a constant number of notes (of $R$ and $S_Q(\varepsilon_\ell)$) around $t_j(\varepsilon_\ell)$. This processing takes $O(n)$ for each note in $Q$. In total we get $m$ ordered sequences (one per note in $Q$) in $O(nm)$ time.

 Analogously to this sweeping, for every vertical bisector between two notes $Q_j$ and $Q_{j+1}$, we can sweep $R$ from left to right with a vertical line starting at $(t_j+t_{j+1})/2$ and compute an ordered sequence $\varepsilon_1<\varepsilon_2<\dots$ of scaling events such that $(t_j(\varepsilon_\ell)+t_{j+1}(\varepsilon_\ell))/2=x_i$ (type 2 event). As before, we can compute the set of \emph{disappearing} and \emph{entering} pairs for every $\varepsilon_\ell$ in constant time.
 This processing takes $O(n)$ for each pair of consecutive notes in $Q$. In total we get $m-1$ ordered sequences in $O(nm)$ time.

After all the sweeping processes are completed, we have $O(m)$ ordered sequences of length $O(n)$. We can merge all these sequences into a single ordered sequence of events in $O(nm\log m)$ time. 
 \end{proof}

 \begin{lemma}\label{lem:linear_func}
 The cost ${\phi}(R,S_Q({\varepsilon}))$ is a linear function between two consecutive events.
 \end{lemma}
\begin{proof} Let $\varepsilon_\ell<\varepsilon<\varepsilon'<\varepsilon_{\ell+1}$, where $\varepsilon_\ell, \varepsilon_{\ell+1}$ are consecutive events.
\begin{equation*} \label{cost}
 {\phi}(R,S_Q({\varepsilon}))=\sum_{(i,j)\in A^{-}_\ell}(t_j(\varepsilon)-x_i)+\sum_{(i,j)\in A^{+}_\ell}(x_i-t_j(\varepsilon))+\sum_{(i,j)\in A^{-}_\ell\cup A^{+}_\ell}\left|p_i - q_j\right| .
 \end{equation*}
 Now, using that $t_j(\varepsilon')=t_j(\varepsilon)+(2j-1)(\varepsilon'-\varepsilon)/2$, we get that
 \begin{equation}\label{formupdate}
{\phi}(R,S_Q({\varepsilon'}))=
{\phi}(R,S_Q({\varepsilon})) + \dfrac{(\varepsilon' - \varepsilon)}{2} \left(\sum_{(i,j)\in A^{-}_\ell}(2j-1) - \sum_{(i,j)\in A^{+}_\ell} (2j-1) \right),
\end{equation}
and the result follows.
\end{proof}

We are ready to introduce the main result of this section.

\begin{theorem}\label{thm:main_scaling_points}
    Problem \ref{prob:scaling-matching} can be solved in $O(nm\log m)$ time.
\end{theorem}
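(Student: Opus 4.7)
The plan is to enumerate the events in sorted order and sweep through them, maintaining the cost $\phi(R, S_Q(\varepsilon))$ together with two auxiliary scalars that permit $O(1)$ updates. By Lemma~\ref{lem:events}, the sorted list of $O(nm)$ events, annotated with the pairs entering and disappearing, can be computed in $O(nm\log m)$ time. By Lemma~\ref{lem:linear_func}, the cost is linear on each subinterval between consecutive events, so the minimum of $\phi(R,S_Q(\varepsilon))$ over the admissible range of $\varepsilon$ is attained at $\varepsilon=0$ or at a one-sided limit of $\phi$ at some event. It therefore suffices to evaluate $\phi$ just before and just after every event and keep the running minimum.

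First I would compute the initial $t$-monotone matching at $\varepsilon=0$, together with $\phi(R,S_Q(0))$, by a single linear scan over $R$ and $Q$ in $O(n+m)$ time; this also yields the initial sets $A^{-}_0$ and $A^{+}_0$. I would simultaneously initialize the two scalars
\[
S^{-}=\sum_{(i,j)\in A^{-}_0}(2j-1),\qquad S^{+}=\sum_{(i,j)\in A^{+}_0}(2j-1),
\]
appearing on the right-hand side of \eqref{formupdate}, also in $O(n+m)$ time.

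Then I would process the events $\varepsilon_1<\varepsilon_2<\dots$ in order. Given $\phi$ at $\varepsilon_\ell$ (just after the $\ell$-th event) and the current values of $S^{-},S^{+}$, formula \eqref{formupdate} yields the left limit $\phi(R,S_Q(\varepsilon_{\ell+1}^{-}))$ in $O(1)$. To cross the event at $\varepsilon_{\ell+1}$ I would apply the rules in the definition of the three event types: in each case only a constant number of pairs enter or leave $A^{-}$ and $A^{+}$, so I can update $A^{-},A^{+},S^{-},S^{+}$, and $\phi$ (accounting for the $l_1$-distance contributions of the affected pairs, including the pitch parts) in $O(1)$. Comparing both one-sided values against the running minimum completes the step. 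Summing over the $O(nm)$ events, the sweep takes $O(nm)$ time, and together with Lemma~\ref{lem:events} the total complexity is $O(nm\log m)$.

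The main subtlety will be correctly handling the bookkeeping at type~2 events: there the time-distance contributions of the leaving pair $(i,j{+}1)\in A^{-}$ and the entering pair $(i,j)\in A^{+}$ are equal at the event (since $x_i=(t_j(\varepsilon)+t_{j+1}(\varepsilon))/2$) but the pitch contributions $|p_i-q_{j+1}|$ and $|p_i-q_j|$ differ, so $\phi$ genuinely jumps; in addition, as noted in the event discussion, the secondary effect on $Q_j$ or $Q_{j+1}$ becoming (or ceasing to be) unmatched must be applied to $A^{-},A^{+}$ as well. Because each of these secondary changes involves only a constant number of pairs (and only neighboring indices), all updates remain $O(1)$ per event, which preserves the overall $O(nm\log m)$ bound.
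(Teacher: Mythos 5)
Your proposal is correct and follows essentially the same route as the paper: compute the annotated, sorted event list via Lemma~\ref{lem:events}, exploit the piecewise linearity of Lemma~\ref{lem:linear_func} to restrict attention to the one-sided limits at events, and sweep while maintaining $\phi$ together with the scalars $S^{-}=\sum(2j-1)$ and $S^{+}=\sum(2j-1)$ for $O(1)$ updates, correcting $\phi$ at each event by the $l_1$-contributions of the entering and disappearing pairs (the paper's $-D+E$ step). Your remark on the discontinuity at type~2 and type~3 events and the secondary rematching of $Q_j$/$Q_{j+1}$ matches the paper's treatment.
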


\begin{proof}
     From Lemma~\ref{lem:linear_func}, we have that the minimum value for ${\phi}(R,S_Q({\varepsilon}))$ is attained at some value of the sequence $\varepsilon_0=0\leq\varepsilon_1<\varepsilon_2<\dots<\varepsilon_{\ell}<\varepsilon_{\ell+1}<\cdots$, where $\varepsilon_0$ represents the initial $t$-monotone matching between $R$ and $Q$. Note that the piecewise linear function $\phi$ can be discontinuous at the events and the solution is then found by evaluating the cost at the endpoints of each piece. See Figure \ref{fig:phi} for an illustration. At events of type 1, the cost function is continuous but for type 2 and type 3, it may not be.

\begin{figure}[h!]
\centering
\includegraphics[width=3in]{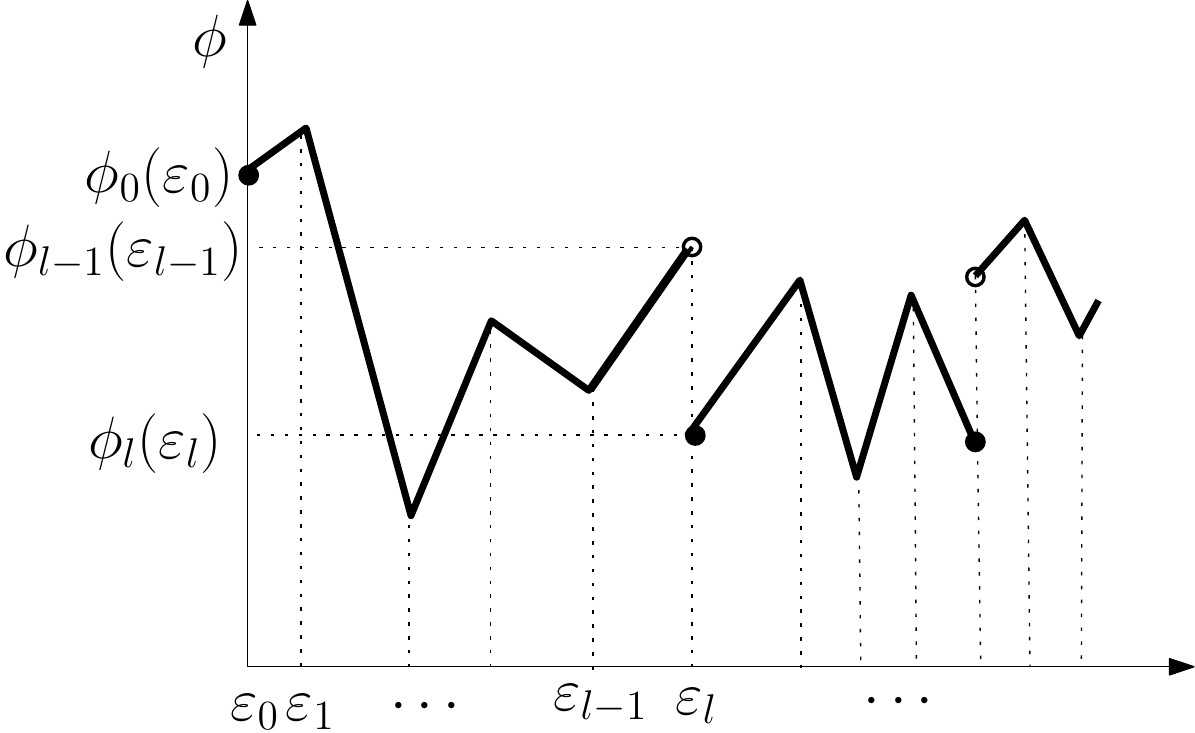}
\caption{$\phi$ is continuous at $\varepsilon_1$  (type 1 event). Later,
the function jumps because of the $l_1$ metric assignment at event $\varepsilon_\ell$ (type 2 or type 3). }
\label{fig:phi}
\end{figure}

The algorithm is as follows. First, compute the sequence of events $\varepsilon_1<\varepsilon_2<\dots<\varepsilon_\ell<\dots$ with the set of \emph{entering} and \emph{disappearing} pairs in each event (this can be done in $O(nm\log m)$ time; Lemma~\ref{lem:events}). We know that $\phi(R, S_Q(\varepsilon))$ is a piecewise linear function (Lemma~\ref{lem:linear_func}), so we conveniently denote by $\phi_\ell(\varepsilon)$ the piece of $\phi(R, S_Q(\varepsilon))$ between $\varepsilon_\ell$ and $\varepsilon_{\ell+1}$ for all $\ell\geq 0$ (see Figure \ref{fig:phi}). Note that
\begin{equation}\label{eq:phi_at_events}
\phi(R, S_Q(\varepsilon))=\left\{
\begin{array}{ll}
   \phi_\ell(\varepsilon) & \text{if\quad}\varepsilon_\ell<\varepsilon<\varepsilon_{\ell+1},\\
   \min\{\phi_{\ell-1}(\varepsilon_\ell),\phi_\ell(\varepsilon_\ell)\} & \text{if\quad}\varepsilon=\varepsilon_\ell.
\end{array}
\right.
\end{equation}

Now we elaborate on how to progressively compute the values \[\phi_0(\varepsilon_0),  \phi_0(\varepsilon_1),  \phi_1(\varepsilon_1),  \phi_1(\varepsilon_2),\ldots,  \phi_\ell(\varepsilon_\ell),  \phi_\ell(\varepsilon_{\ell+1}),\ldots\]

Let $S^{-}_\ell=\sum_{(i,j)\in A^{-}_\ell}(2j-1)$ and $S^{+}_\ell=\sum_{(i,j)\in A^{+}_\ell}(2j-1)$ (the sum of indices of  (\ref{formupdate})).

At the initial step, from the $t$-monotone matching between $R$ and $Q$, we compute the values $\phi_0(\varepsilon_0)$, $S^{-}_0$ and $S^{+}_0$ in $O(n+m)$ time.
For $\ell=1,2,\dots$ we proceed as follows:
\begin{enumerate}
    \item $\phi_{\ell-1}(\varepsilon_\ell) = \phi_{\ell-1}(\varepsilon_{\ell-1})+\frac{\varepsilon_\ell-\varepsilon_{\ell-1}}{2}\left(S^{-}_{\ell-1}-S^{+}_{\ell-1}\right)$ (Lemma~\ref{lem:linear_func})
    \item $\phi_\ell(\varepsilon_\ell) = \phi_{\ell-1}(\varepsilon_\ell) - D + E$ where $D$  denotes the sum of the $l_1$-distance of the \emph{disappearing} pairs of points from $A^-$ and $A^+$. Analogously, $E$ denotes the sum of the $l_1$-distance of the \emph{entering} pairs of points in $A^-$ and $A^+$.
    \item $S^{-}_{\ell}$ and $S^{+}_{\ell}$ can be easily computed from $S^{-}_{\ell-1}$ and $S^{+}_{\ell-1}$ because we have the \emph{disappearing} and \emph{entering} pairs in $\varepsilon_\ell$ (Lemma \ref{lem:events}).
\end{enumerate}
Note that $\phi$ is continuous in $\varepsilon_\ell$ (i.e., $\phi_\ell(\varepsilon_\ell) = \phi_{\ell-1}(\varepsilon_\ell)$) if and only if $D=E$. For example, $D=E$ at events of type 1 because the disappearing point in $A^+$ is an entering point in $A^-$.

While we are sweeping the list of events, we can compute the values $\phi(R,S_Q(\varepsilon_\ell))$ by using equation \eqref{eq:phi_at_events}, and by keeping the overall minimum evaluation, we get the best scaling at the end of this process. This sweeping takes $O(nm)$ time (length of the list of events); therefore, the overall time cost of the algorithm is determined by the computation of the ordered list of events which takes $O(nm\log m)$ time (Lemma \ref{lem:events}).
 \end{proof}

\section{Compressing Operation}\label{sec:comp}
	
\subsection{t-Monotone Matching as Similarity Measure}
	
Let $R=\{R_i=(x_i,p_i),\, i=1,\dots,n\}$ be a sequence of notes in point representation and $k\in\{1,\dots,n\}$.
Let $C_k(R)=\{R_{i_1},\dots, R_{i_k}\}\subseteq R$ be a $k$-compressed melody of $R$. We will call this subset a \emph{$k$-set of $R$} and we denote it by $C_k$.
Consider a $t$-monotone matching between $R$ and $C_k$ with cost  $\phi(R,C_k)=\sum_{1\leq j \leq k}\phi(R_{i_j})$.
A solution of Problem \ref{prob:compressing-matching} is an optimal $k$-set of $R$ minimizing the cost ${\phi}$.

\begin{definition}
Let $C_k=\{R_{i_1},\dots,R_{i_k}\}$ be a $k$-set of $R$.
\begin{enumerate}
\item Set  $\overleftarrow{R}(C_k)=\{R_i\in R\colon x_i\leq x_{i_k})\}$. The \emph{left partial evaluation} of $C_k$ is defined as
\[\overleftarrow{\phi}(R,C_k)=\phi(\overleftarrow{R}(C_k),C_k).\]
\item We say that $C_k$ is \emph{left optimal} if $\overleftarrow{\phi}( R,  C_k)\leq \overleftarrow{\phi}(R,  C_k')$ for all $C_k'=\{R'_{i_1},\dots,R'_{i_k}\}$ where $R'_{i_k}=R_{i_k}$.

\item The  \emph{$j$-prefix of $C_k$} is the $j$-set formed by the first $j$ points of $C_k$.
\end{enumerate}
\end{definition}

 The following result can easily be proven by contradiction.
	\begin{lemma}\label{lem:prefix}
		Let $C_k^*=\{R^*_{i_1},\dots,R^*_{i_k}\}$ be an optimal $k$-set of $R$. Then for all $1\leq j \leq k$, the $j$-prefix of $C_k^*$ is left optimal.
	\end{lemma}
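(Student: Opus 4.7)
I will argue by contradiction. Suppose that for some $j$ with $1\leq j \leq k$ the $j$-prefix $C_j^*=\{R^*_{i_1},\dots,R^*_{i_j}\}$ of an optimal $k$-set $C_k^*$ is not left optimal. Then by definition there is a $j$-set $C_j'=\{R'_{i_1},\dots,R'_{i_{j-1}},R^*_{i_j}\}$ (sharing the last element $R^*_{i_j}$) with $\overleftarrow{\phi}(R,C_j')<\overleftarrow{\phi}(R,C_j^*)$. The plan is to build the competing $k$-set $C_k'=\{R'_{i_1},\dots,R'_{i_{j-1}},R^*_{i_j},R^*_{i_{j+1}},\dots,R^*_{i_k}\}$, obtained by replacing the first $j-1$ entries of $C_k^*$ by those of $C_j'$, and to show that $\phi(R,C_k')<\phi(R,C_k^*)$, contradicting the optimality of $C_k^*$.

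The heart of the argument is a clean decomposition of the total $t$-monotone matching cost. Because every $R_i$ is assigned either to its left- or right-$t$-nearest element of the compressed melody, and the midpoints that determine this assignment are formed by \emph{consecutive} elements of the compressed melody, the matching splits naturally at $R^*_{i_j}$: every $R_i\in\overleftarrow{R}(C_j^*)$ (i.e.\ $x_i\leq x_{i_j^*}$) is matched to an element of the prefix via a midpoint involving only prefix elements, while every $R_i$ with $x_i>x_{i_j^*}$ is matched to some element of $\{R^*_{i_j},R^*_{i_{j+1}},\dots,R^*_{i_k}\}$ via midpoints involving only that suffix. Applying this decomposition to both $C_k^*$ and $C_k'$ should yield
\[\phi(R,C_k^*)=\overleftarrow{\phi}(R,C_j^*)+T, \qquad \phi(R,C_k')=\overleftarrow{\phi}(R,C_j')+T,\]
where $T$ collects the contribution of the points with $x_i>x_{i_j^*}$ (including those that happen to be matched to $R^*_{i_j}$) and depends only on the common suffix $R^*_{i_j},R^*_{i_{j+1}},\dots,R^*_{i_k}$. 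Subtracting gives $\phi(R,C_k^*)-\phi(R,C_k')=\overleftarrow{\phi}(R,C_j^*)-\overleftarrow{\phi}(R,C_j')>0$, the desired contradiction.

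The main obstacle is a careful verification of this decomposition. Concretely, I need to check that (i) the matching of every $R_i\in\overleftarrow{R}(C_j^*)$ in the full problem really agrees with its matching in the restricted problem against $C_j^*$ alone, so the first term is exactly $\overleftarrow{\phi}$; and (ii) no point $R_i$ with $x_i>x_{i_j^*}$ can be matched across $R^*_{i_j}$ to an element of the modified prefix, so that the contribution $T$ is identical for $C_k^*$ and $C_k'$. Both facts follow from $t$-monotonicity together with the presence of $R^*_{i_j}$ in both $k$-sets, and once they are established the remainder is bookkeeping.
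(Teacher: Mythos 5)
Your proposal is correct and is exactly the exchange-by-contradiction argument the paper has in mind (the paper only remarks that the lemma "can easily be proven by contradiction" and omits the details). The decomposition $\phi(R,C_k^*)=\overleftarrow{\phi}(R,C_j^*)+T$ with $T$ depending only on the common suffix is the right way to make that exchange rigorous, and both verification points (i) and (ii) do hold because the midpoint rule assigns each $R_i$ with $x_i\leq x_{i_j^*}$ (resp.\ $x_i> x_{i_j^*}$) using only consecutive elements of the prefix (resp.\ suffix), which are unchanged by the replacement.
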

	
    This lemma can be directly extended as the following.
	
	\begin{corollary}\label{cor:prefix}
	Let $C_j=\{R_{i_1},\dots,R_{i_j}\}$ be a $j$-set of $R$ which is left optimal. Then every prefix of $C_j$ is also left optimal.
	\end{corollary}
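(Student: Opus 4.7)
The plan is to reduce to a one-step claim and then iterate: it suffices to show that the $(j-1)$-prefix $C_{j-1}=\{R_{i_1},\dots,R_{i_{j-1}}\}$ of a left-optimal $j$-set $C_j$ is left optimal, since the general statement then follows by downward induction on the prefix length.

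To prove the one-step claim I would argue by contradiction, mirroring the exchange argument behind Lemma \ref{lem:prefix}. Suppose $C_{j-1}$ is not left optimal, so some $C_{j-1}'=\{R'_{i_1},\dots,R'_{i_{j-2}},R_{i_{j-1}}\}$ with the same last element has strictly smaller left partial evaluation. Define $C_j'=C_{j-1}'\cup\{R_{i_j}\}$, which is a legitimate $j$-set ending at $R_{i_j}$. The goal is to show $\overleftarrow{\phi}(R,C_j')<\overleftarrow{\phi}(R,C_j)$, contradicting the left optimality of $C_j$.

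The technical heart is a decomposition of the $t$-monotone matching cost along the chain of selected notes. Using the midpoint rule in the definition of $t$-monotone matching, every $R$-point in $\overleftarrow{R}(C_j)$ with $x_i\leq x_{i_{j-1}}$ is matched identically in $C_j$ and in $C_{j-1}$, because the midpoint between $R_{i_{j-1}}$ and $R_{i_j}$ lies strictly to the right of $x_{i_{j-1}}$; the remaining contribution from $R$-points in the strip $x_{i_{j-1}}<x_i\leq x_{i_j}$ depends only on $R_{i_{j-1}}$ and $R_{i_j}$. The same decomposition applies to $C_j'$ and $C_{j-1}'$, whose two rightmost entries are identical to those of $C_j$ and $C_{j-1}$. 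Subtracting the two identities yields $\overleftarrow{\phi}(R,C_j)-\overleftarrow{\phi}(R,C_j')=\overleftarrow{\phi}(R,C_{j-1})-\overleftarrow{\phi}(R,C_{j-1}')>0$, which is the desired contradiction. The main obstacle is establishing this cost decomposition cleanly; once it is in place, the one-step exchange and the induction are routine.
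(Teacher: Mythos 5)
Your proposal is correct and matches the paper's intent: the paper omits the proof (stating only that Lemma \ref{lem:prefix} ``can be directly extended''), and the additive decomposition you identify --- the cost up to $R_{i_{j-1}}$ plus a strip term depending only on the last two selected notes --- is exactly the decomposition $W_{0,i_1}+W_{i_1,i_2}+\cdots$ that the paper introduces immediately afterwards to set up its dynamic program. Your one-step exchange plus downward induction is the standard way to fill in the omitted argument, and the key observation (points with $x_i\leq x_{i_{j-1}}$ are matched identically under $C_j$ and $C_{j-1}$ because the relevant midpoint lies to the right of $x_{i_{j-1}}$) is sound.
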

	
These properties enable us to solve Problem \ref{prob:compressing-matching} with dynamic programming.
Let $R_i$ and $R_{i'}$ be two consecutive points in a $k$-set. Let $W_{i,i'}$ be the cost of the $t$-monotone matching for points between $R_i$ and $R_{i'}$. Then for a $k$-set $C_k=\{R_{i_1},\dots,R_{i_k}\}$, $1\leq i_j\leq n$ and $i_j<i_{j+1}$, the cost function $\phi(R,C_k)$ can be rewritten as $W_{0,i_1}+W_{i_1,i_2}+\ldots+W_{i_k,(n+1)}$, where $W_{0,i_1}$ (resp.\ $W_{i_k,(n+1)}$) denotes the assignment cost of the points to the left (resp.\ right) of $R_{i_1}$ (resp.\ $R_{i_k}$).
	
Now assume that we have preprocessed the values $W_{i,i'}$ for all $0\leq i<i'\leq n+1$.
Later we will show how to compute the values of $W_{i,i'}$ efficiently.
Consider the tables $C[i,j]$ and $P[i,j]$ whose keys $i$ and $j$ are integers in the intervals $[1,n]$ and $[1,k]$. The cell $C[i,j]$ (resp.\ $P[i,j]$) stores the cost (resp.\ the index) of the left optimal $j$-set $C_j$ that ends using $p_i$ as the $j$-th point of the subset.
Let see how to fill $C[i,j]$ and $P[i,j]$.
	
From the definitions of the tables $C$ and $P$, 
Lemma \ref{lem:prefix} and Corollary \ref{cor:prefix}, we derive the following dynamic programming formula.

	\begin{lemma}\label{lem:tables}
	 The values of the cells of tables $C$ and $P$ are:
	 \begin{align*}
	     C[i,1] & = W_{0,i},\\
	     P[i,1] & = 0,\\
	     C[i,j] & = \min_{1\leq l<i}\{C[l,j-1]+W_{l,i}\}=C[i^*,j-1]+W_{i^*,i},\\
	      P[i,j] & = i^*.
	 \end{align*}
	\end{lemma}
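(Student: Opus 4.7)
The plan is to verify the four identities by direct appeal to the definitions of $W_{i,i'}$, of left optimality, and of the tables $C$ and $P$, leaning crucially on the prefix-optimality property established in Lemma \ref{lem:prefix} and Corollary \ref{cor:prefix}.

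First I would dispatch the base case $j=1$. By definition, a $1$-set $\{R_i\}$ is always trivially left optimal since there is nothing to compare against: the only candidate with $R_{i_1}=R_i$ is $\{R_i\}$ itself. Its left partial evaluation is the cost of $t$-monotone matching of all points $R_{i'}$ with $x_{i'}\leq x_i$ against the single point $R_i$, which is exactly $W_{0,i}$ by the definition of this quantity. Since there is no predecessor index, setting $P[i,1]=0$ is consistent with the ``sentinel'' convention $W_{0,i}$.

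For the inductive step, fix $i$ and $j\geq 2$, and let $C_j=\{R_{i_1},\dots,R_{i_{j-1}},R_i\}$ be a left optimal $j$-set ending at $R_i$. The key observation is the decomposition of the left partial evaluation: because the matching is $t$-monotone, the points of $\overleftarrow{R}(C_j)$ between two consecutive chosen points $R_{i_t}$ and $R_{i_{t+1}}$ contribute independently a cost of $W_{i_t,i_{t+1}}$, and the points to the left of $R_{i_1}$ contribute $W_{0,i_1}$. Hence
\[
\overleftarrow{\phi}(R,C_j)=\overleftarrow{\phi}(R,\{R_{i_1},\dots,R_{i_{j-1}}\})+W_{i_{j-1},i}.
\]
Applying Corollary \ref{cor:prefix}, the $(j-1)$-prefix $\{R_{i_1},\dots,R_{i_{j-1}}\}$ must itself be left optimal. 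So its cost equals $C[i_{j-1},j-1]$ by definition of the table $C$. Ranging over all possible predecessor indices $l=i_{j-1}\in\{1,\dots,i-1\}$ and minimizing gives
\[
C[i,j]=\min_{1\leq l<i}\bigl\{C[l,j-1]+W_{l,i}\bigr\},
\]
and recording the argmin in $P[i,j]=i^{*}$ makes the back-pointer consistent with an actual optimal predecessor.

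The main subtlety, and the only step worth dwelling on, is justifying that the minimum really is attained and that any $l$ achieving it yields a genuinely left optimal $j$-set. If $C[l^{*},j-1]$ is realized by some left optimal $(j-1)$-set $C_{j-1}^{*}$ ending at $R_{l^{*}}$, then appending $R_i$ produces a $j$-set whose left partial evaluation equals $C[l^{*},j-1]+W_{l^{*},i}$; conversely, any left optimal $j$-set ending at $R_i$ decomposes as above with a left optimal prefix by Corollary \ref{cor:prefix}, so no alternative construction can do better. Combining both inequalities gives the equality claimed in the statement, completing the proof.
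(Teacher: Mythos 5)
Your proof is correct and follows exactly the route the paper intends: the paper gives no explicit proof of this lemma, merely asserting that it follows from the definitions of $C$ and $P$, the decomposition $\phi(R,C_k)=W_{0,i_1}+W_{i_1,i_2}+\cdots+W_{i_k,(n+1)}$, Lemma \ref{lem:prefix} and Corollary \ref{cor:prefix}. Your write-up supplies the standard dynamic-programming correctness argument (base case, decomposition of the left partial evaluation, and the two-sided inequality via prefix left-optimality) that the paper leaves implicit, so there is nothing to add.
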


	\begin{theorem}\label{thm:dynamic}
	    Assuming that the values $W_{i,i'}$ are already known, the optimum $k$-set can be computed in $O(kn^2)$ time.
	\end{theorem}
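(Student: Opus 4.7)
The plan is to justify the $O(kn^2)$ running time by directly implementing the dynamic programming recurrence from Lemma \ref{lem:tables} in a bottom-up manner and then showing how to recover the optimal $k$-set from the filled tables.

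First, I would fill the base column. By definition, $C[i,1]=W_{0,i}$ and $P[i,1]=0$ for every $i\in\{1,\dots,n\}$. Since the values $W_{0,i}$ are given, this column is filled in $O(n)$ time.

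Next, for each $j$ from $2$ to $k$ and each $i$ from $1$ to $n$, I would compute
\[
C[i,j]=\min_{1\leq l<i}\bigl\{C[l,j-1]+W_{l,i}\bigr\},
\]
and simultaneously record the minimizing index in $P[i,j]$. Each such minimization iterates over at most $n$ candidates $l$, and each candidate requires a single table lookup in $C[\cdot,j-1]$, a lookup of the precomputed $W_{l,i}$, an addition, and a comparison, so one cell costs $O(n)$ time. There are $kn$ cells overall, yielding $O(kn^2)$ time for filling the tables. Correctness of these values as costs (resp.\ indices) of left-optimal $j$-sets ending at $R_i$ follows from Lemma \ref{lem:tables}, which in turn rests on the prefix-optimality property established in Lemma \ref{lem:prefix} and Corollary \ref{cor:prefix}.

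Once the tables are built, the optimal $k$-set is obtained in two steps. I would compute
\[
i^{*}=\arg\min_{1\leq i\leq n}\bigl\{C[i,k]+W_{i,n+1}\bigr\},
\]
which takes $O(n)$ time; this handles the rightmost piece of the matching (the tail $W_{i_k,n+1}$ that is not already absorbed into $C[i,k]$ since $C[i,k]$ corresponds to a left partial evaluation). Then I would reconstruct the optimal set by backtracking through $P$: set $i_k=i^{*}$ and iteratively $i_{j-1}=P[i_j,j]$ for $j=k,k-1,\dots,2$, producing $\{R_{i_1},\dots,R_{i_k}\}$ in $O(k)$ additional time. The optimal cost is $C[i^{*},k]+W_{i^{*},n+1}$.

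The only non-routine issue is justifying why the right tail $W_{i_k,n+1}$ can be handled outside the tables without losing optimality; this is exactly where left optimality matters, because once the identity of the last selected note $R_{i_k}$ is fixed, Lemma \ref{lem:prefix} guarantees that the optimal $k$-set with last element $R_{i_k}$ has cost $C[i_k,k]+W_{i_k,n+1}$, so minimizing this quantity over $i_k$ gives the overall optimum. Summing the three phases ($O(n)$ for the base column, $O(kn^2)$ for filling the remaining $k-1$ columns, and $O(n+k)$ for the final search and backtracking) gives the claimed $O(kn^2)$ bound.
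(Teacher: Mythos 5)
Your proposal is correct and follows essentially the same route as the paper: fill $C$ and $P$ bottom-up via the recurrence of Lemma \ref{lem:tables} in $O(kn^2)$ time, select $i^*$ minimizing $C[i,k]+W_{i,(n+1)}$ in $O(n)$ time, and backtrack through $P$ in $O(k)$ time. Your added remarks on the per-cell cost and on why the right tail $W_{i_k,(n+1)}$ can be handled outside the tables only make explicit what the paper leaves implicit.
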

	\begin{proof}
	    From Lemma \ref{lem:tables}, the tables $C$ and $P$ can be filled properly in $O(kn^2)$ time. Then the cost of the optimum $k$-set is
   	\[\min_{1\leq i\leq n}\{C[i,k]+W_{i,(n+1)}\},\]
    	and the elements of the $k$-set can be recovered navigating backward in the table $P$ (i.e., the $k$-th element is the point $R_{i^*}$ where $i^*$ minimizes the value $C[i^*,k]+W_{i^*,(n+1)}$, the $(k-1)$-th element is stored in $P[i^*,k]$, the $(k-2)$-th element is stored in $P[P[i^*,k],k-1]$, etc.). Computing the value $i^*$ takes $O(n)$ time and the process of  backward navigation in $P$ takes $O(k)$ time.
	\end{proof}

	Let us focus now on how to compute the values $W_{i,i'}$. Consider that $R_i$ and $R_{i'}$ are two consecutive selected elements ($i<i'$). Let $m=(x(R_i)+x(R_{i'}))/2$, and denote the Manhattan distance between two points by $d_1(\cdot)$. Consider
	        \[\mathcal{L}_{i,i'}=M_l+\sum_{\substack{
	R_j\in R,\\
	x(R_{i})<x(R_j)< m}}
	d_1(R_j,R_{i})\quad \text{and}\quad \mathcal{R}_{i,i'}=M_r+\sum_{\substack{
	R_{i'}\in R,\\ m< x(R_j)<x(R_{i'})}}
	d_1(R_j,R_{i'}),\]
	where $M_l=d_1(R^*,R_{i})$ if there is $R^*\in R$ such that $x(R^*)=m$ and $d_1(R^*,R_{i})\leq d_1(R^*,R_{i'})$; otherwise, $M_l=0$. Analogously, $M_r=d_1(R^*,R_{i'})$ if there is $R^*\in R$ such that $x(R^*)=m$ and $d_1(R^*,R_{i'})< d_1(R^*,R_{i})$; otherwise $M_r=0$. Notice that $M_l$ and $M_r$ are not both greater than zero; in this way we are guaranteeing that the $x$-middle point between $R_{i}$ and $R_{i'}$ (if it exists) is assigned to only one center.
	Note that $W_{i,i'}=\mathcal{L}_{i,i'}+ \mathcal{R}_{i,i'}$.  For convenience, we consider that $\mathcal{L}_{0,i}=\mathcal{R}_{i,(n+1)}=0$ for all $1\leq i\leq n$. Also, $\mathcal{R}_{0,i}$ (resp.\ $\mathcal{L}_{i,(n+1)}$) is the assignment cost of the points to the left (resp.\ right) of $R_{i}$.
	
	\begin{lemma}\label{lem:swapping}
	 For a given fixed value $i$, $\mathcal{L}_{i,i+1}, \mathcal{L}_{i,i+2},\dots,\mathcal{L}_{i,(n+1)}$ (resp.\ $\mathcal{R}_{i-1,i}, \mathcal{R}_{i-2,i},\dots,\mathcal{R}_{0, i}$) can be computed in $O(n)$ time.
	\end{lemma}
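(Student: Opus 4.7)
The plan is to exploit the strict monotonicity of the midpoint $m(i')=(x(R_i)+x(R_{i'}))/2$ in $i'$, which implies that the set of indices $\{j: x(R_i)<x(R_j)<m(i')\}$ is monotonically non-decreasing as $i'$ grows. This immediately suggests a two-pointer sweep: walk $i'$ from $i+1$ up to $n+1$ while a second pointer $j$ advances through the sorted $x$-coordinates and never moves backward. Because each point enters the running sum at most once, the cumulative cost of the sweep is linear, which is exactly what the lemma claims.

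Concretely, I would initialize a running sum $S:=0$ and a pointer $j:=i+1$. For each $i'$ from $i+1$ to $n+1$ in increasing order, I would: (i) compute $m=(x(R_i)+x(R_{i'}))/2$; (ii) while $x(R_j)<m$, add $d_1(R_j,R_i)$ to $S$ and increment $j$; (iii) determine $M_l$ in $O(1)$ by checking whether $x(R_j)=m$ and, if so, whether $d_1(R_j,R_i)\leq d_1(R_j,R_{i'})$, applying the tie-breaking rule from the definition; and (iv) output $\mathcal{L}_{i,i'}=S+M_l$. Note that $R_{i'}$ itself is never absorbed into $S$, since $x(R_{i'})>m$ always holds.

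The amortized analysis is then immediate. The pointer $j$ advances monotonically and takes at most $n+1-i$ values across the whole sweep, so the total cost of the inner while-loop across all iterations of $i'$ is $O(n)$; the remaining work per iteration of $i'$ is $O(1)$, and there are $O(n)$ iterations, yielding an overall $O(n)$ bound. The sequence $\mathcal{R}_{i-1,i},\mathcal{R}_{i-2,i},\ldots,\mathcal{R}_{0,i}$ is produced by the mirror-image right-to-left sweep, with a pointer advancing leftward from $i-1$ and the analogous rule for $M_r$.

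The one place where care is needed — more a bookkeeping subtlety than a genuine obstacle — is the handling of $M_l$. The running sum $S$ depends only on points strictly below $m$, but $M_l$ depends on the current $R_{i'}$ as well, so it can change between consecutive iterations even when $S$ does not. I would have to be careful to apply the definition of $M_l$ consistently with the companion computation of $\mathcal{R}_{i,i'}$, so that a point sitting exactly at the midpoint is charged to exactly one of the two centers. Once this $O(1)$ check is in place, the linear-time bound follows.
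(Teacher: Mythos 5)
Your proposal is correct and is essentially the paper's argument: the paper's proof is a one-sentence appeal to a left--right vertical-line sweep from $R_i$ (and the mirror-image right--left sweep for the $\mathcal{R}$ values), and your two-pointer formulation with the amortized analysis and the $O(1)$ handling of $M_l$ is simply a more explicit write-up of that same sweep.
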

	
	\begin{proof}
	Using a vertical line to make a left--right sweep from $R_i$, we can compute all the values $\mathcal{L}_{i,i+1}, \mathcal{L}_{i,i+2},\dots, \mathcal{L}_{i,(n+1)}$ in $O(n)$ time. Analogously, we can compute the sequence of values $\mathcal{R}_{i-1,i}$, $\mathcal{R}_{i-2,i}$, \dots, $\mathcal{R}_{0, i}$ by using a right--left  sweep
	 from $R_i$.
	\end{proof}
	
	\begin{lemma}\label{lem:joining}
All values $W_{i,i'}$ can be computed in $O(n^2)$ time.
	\end{lemma}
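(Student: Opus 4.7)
The plan is to invoke Lemma~\ref{lem:swapping} once for each index and then combine the resulting left- and right-contribution tables entry by entry. The main observation is that the values $W_{i,i'}=\mathcal{L}_{i,i'}+\mathcal{R}_{i,i'}$ decompose into a sum of two quantities, each of which is indexed by only one of the two endpoints in a natural sweeping order, so we can amortize the work.

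More precisely, I would proceed as follows. First, for every fixed $i\in\{0,1,\dots,n\}$, perform a left-to-right sweep starting from $R_i$ to compute the entire sequence $\mathcal{L}_{i,i+1},\mathcal{L}_{i,i+2},\dots,\mathcal{L}_{i,(n+1)}$ in $O(n)$ time, using Lemma~\ref{lem:swapping}; store these values in a two-dimensional table indexed by $(i,i')$. Repeating this for all $i$ costs $O(n)\cdot O(n)=O(n^2)$ time in total. Symmetrically, for every fixed $i'\in\{1,\dots,n+1\}$, perform a right-to-left sweep starting from $R_{i'}$ to compute $\mathcal{R}_{i'-1,i'},\mathcal{R}_{i'-2,i'},\dots,\mathcal{R}_{0,i'}$ in $O(n)$ time and store them in a second two-dimensional table. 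This also takes $O(n^2)$ time overall.

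Once both tables are available, for each pair $(i,i')$ with $0\le i<i'\le n+1$, we compute
\[
W_{i,i'}=\mathcal{L}_{i,i'}+\mathcal{R}_{i,i'}
\]
by a single table lookup and one addition, which takes $O(1)$ time per pair. Since there are $\binom{n+2}{2}=O(n^2)$ such pairs, the combination step also runs in $O(n^2)$ time. Summing the three phases yields the claimed $O(n^2)$ bound.

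There is essentially no hard part beyond bookkeeping: the boundary conventions $\mathcal{L}_{0,i}=\mathcal{R}_{i,(n+1)}=0$ and the special definitions of $\mathcal{R}_{0,i}$ and $\mathcal{L}_{i,(n+1)}$ need to be respected so that the computed $W_{0,i_1}$ and $W_{i_k,(n+1)}$ correctly represent the assignment cost of the points that lie entirely to the left of the first chosen point or to the right of the last one. Care is also needed with the tie-breaking at the midpoint $m=(x(R_i)+x(R_{i'}))/2$: during the sweep from $R_i$ (resp.\ $R_{i'}$) we must assign the point at $x$-coordinate $m$, if any, to exactly one of the two centres according to the rule defining $M_l$ and $M_r$, so that the midpoint is never double-counted in $\mathcal{L}_{i,i'}+\mathcal{R}_{i,i'}$. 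This can be handled in $O(1)$ extra time per pair during the combination step, which does not affect the asymptotic bound.
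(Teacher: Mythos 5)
Your proposal is correct and follows essentially the same approach as the paper: invoke Lemma~\ref{lem:swapping} once per index to build the tables of $\mathcal{L}$ and $\mathcal{R}$ values in $O(n^2)$ total time, then form each $W_{i,i'}=\mathcal{L}_{i,i'}+\mathcal{R}_{i,i'}$ by a constant-time addition over the $O(n^2)$ pairs. The extra remarks on boundary conventions and midpoint tie-breaking are consistent with the definitions and do not change the argument.
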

	\begin{proof}
	All the sequences $\mathcal{L}_{i,i+1},\dots,\mathcal{L}_{i,(n+1)}$ and $\mathcal{R}_{i-1,i}, \mathcal{R}_{i-2,i},\dots,\mathcal{R}_{0, i}$ can be computed in $O(n^2)$ time (there are $O(n)$ sequences and computing the values of a sequence takes $O(n)$ time by Lemma~\ref{lem:swapping}). Then computing the value $W_{i,i'}=\mathcal{L}_{i,i'}+\mathcal{R}_{i,i'}$ for all $i<i'$ takes $O(n^2)$ time.
	\end{proof}
	
	From Lemma \ref{lem:joining} and Theorem~\ref{thm:dynamic} we arrive at the following theorem.
	
	\begin{theorem}
Problem \ref{prob:compressing-matching} can be solved in $O(kn^2)$ time.
	\end{theorem}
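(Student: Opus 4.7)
The plan is to combine the two technical lemmas already established into a two-stage algorithm whose total running time is dominated by the dynamic programming phase. Concretely, I would first run the preprocessing stage that produces the table of pairwise costs $W_{i,i'}$ for every $0\leq i<i'\leq n+1$, and then feed those values into the dynamic program of Theorem~\ref{thm:dynamic}.

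First, I would invoke Lemma~\ref{lem:joining} to fill a two-dimensional table storing $W_{i,i'}$ for all pairs with $i<i'$. That lemma already shows this can be done in $O(n^2)$ time by performing, for each candidate index $i$, a left-to-right sweep that produces the sequence $\mathcal{L}_{i,i+1},\dots,\mathcal{L}_{i,(n+1)}$ and a right-to-left sweep that produces $\mathcal{R}_{i-1,i},\dots,\mathcal{R}_{0,i}$, and then summing $\mathcal{L}_{i,i'}+\mathcal{R}_{i,i'}$. The only care needed is in handling the tie-breaking rule for a point lying exactly on the $x$-midpoint between two consecutive selected notes, which is incorporated in the definitions of $M_l$ and $M_r$ and can be evaluated in $O(1)$ per pair during the sweep.

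Second, with the $W_{i,i'}$ values tabulated, I would apply Theorem~\ref{thm:dynamic} directly. That theorem, together with the recurrence of Lemma~\ref{lem:tables}, fills the $n\times k$ tables $C$ and $P$ in $O(kn^2)$ time: each cell $C[i,j]$ takes a minimum over $O(n)$ candidates $l<i$, each evaluated in $O(1)$ by a table lookup of $C[l,j-1]$ and $W_{l,i}$. Finally, the optimal total cost is obtained by computing $\min_{1\leq i\leq n}\{C[i,k]+W_{i,(n+1)}\}$ in $O(n)$ time, and the optimal $k$-set itself is recovered in $O(k)$ time by back-tracing through $P$, as described in the proof of Theorem~\ref{thm:dynamic}.

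Adding the three contributions, the total running time is $O(n^2)+O(kn^2)+O(n+k)=O(kn^2)$, which gives the claimed bound. There is no real obstacle here: the correctness argument has already been encapsulated in Lemma~\ref{lem:prefix} and Corollary~\ref{cor:prefix} (ensuring that the prefixes of an optimal solution are left optimal, so the dynamic programming recurrence is valid), and the complexity argument is a direct addition of the two lemma bounds. The proof is therefore short and reads essentially as: apply Lemma~\ref{lem:joining}, then apply Theorem~\ref{thm:dynamic}, and sum the costs.
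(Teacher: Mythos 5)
Your proposal matches the paper's argument exactly: the paper derives this theorem directly by combining Lemma~\ref{lem:joining} (precompute all $W_{i,i'}$ in $O(n^2)$ time) with Theorem~\ref{thm:dynamic} (run the dynamic program in $O(kn^2)$ time given those values). The proof is correct and essentially identical to the paper's.
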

	
\subsection{Area as Similarity Measure}

Given a melody $R$ with $n$ notes  in the segment representation  and a positive integer  $k<n$, we want to find a $k$-compressed melody $C_k(R)$ of $R$ that minimizes the area between $R$ and $C_k(R)$.
Let $P=\{p_1,\dots, p_\rho \}$, $\rho\leq n$, be the set of different pitch values among the notes of $R$ (recall that two non-consecutive notes may have the same pitch).
Let $p(x_i)\in P$ denote the pitch of the note in $R$ that starts at time $x_i$ ($0\leq i<n$).
In Figure \ref{fig:compressing} b), a solution of Problem \ref{prob:compressing-area} for $k=2$ is shown for which the two supporting segments are $p(x_1)$ and $p(x_6)$.
In this section, we propose an algorithm to solve Problem \ref{prob:compressing-area} in $O(k\rho n)$  time, which is subquadratic in many cases in oral music traditions where $\rho$ is much smaller than $n$.

\begin{lemma}\label{lem:technical}
Given a melody $R$
with $n$ notes, in segment representation, let $X = (x_0=0,\dots,x_n)$ be the time partition of $R$. For every  $k < n$, there exists a melody  $M^*_k$ with time partition $T=(t_0=0,\dots, t_k=x_n)$ that minimizes the difference in area from $R$ over all all the melodies formed by $k$ notes starting at time $x_0$ and ending at time $x_n$ such that $M^*_k$ fulfills the following two properties:
\begin{enumerate}
    \item $T\subseteq X$, and
    \item each segment of $M^*_k$ contains at least a segment of $R$.
\end{enumerate}
\end{lemma}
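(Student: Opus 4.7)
The plan is to start from any globally optimal $k$-note melody $M_k$ on $[x_0,x_n]$ (existence follows from continuity of the area functional on a compact space of admissible partitions and pitches) and to modify it, without changing the area, into an $M^*_k$ satisfying both properties. Fix the pitches $\pi_1,\dots,\pi_k$ of $M_k$ so that only the time partition $T=(t_0,\dots,t_k)$ is variable, and consider shifting one interior breakpoint $t_j$ while holding all others fixed. The heart of the argument is to show that $A(t_j)$, viewed as a function of this single variable on $(t_{j-1},t_{j+1})$, is continuous piecewise linear with knots exactly at $X\cap(t_{j-1},t_{j+1})$.

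For the key calculation, suppose $t_j$ varies inside a sub-interval $(x_i,x_{i+1})$ on which $R$ has the constant pitch $p(x_i)$. The contribution of this sub-interval to the area is
\[
(t_j - x_i)\,|p(x_i)-\pi_j| + (x_{i+1}-t_j)\,|p(x_i)-\pi_{j+1}|,
\]
which is affine in $t_j$; every other sub-interval of $X$ contributes a constant. Since an affine function on a compact interval attains its minimum at an endpoint, the minimum of $A(t_j)$ over $[t_{j-1},t_{j+1}]$ is attained at some point of $X$. Replacing $t_j$ by such a minimizer does not increase the area. Iterating this relocation over $j=1,\dots,k-1$ yields a $k$-note melody of the same optimal cost with $T\subseteq X$, establishing property~1. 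Property~2 is then automatic as long as the breakpoints remain pairwise distinct, since each segment $[t_{j-1},t_j]$ is a non-empty union of consecutive segments of $R$ and therefore contains at least one of them.

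The main obstacle I anticipate is the edge case in which relocation would force some $t_j$ to collide with $t_{j-1}$ or $t_{j+1}$. In that situation the optimality of $M_k$ forces the affine function on the relevant sub-interval to be locally constant (otherwise a small shift of $t_j$ toward the decreasing side would strictly improve $M_k$), so the breakpoint can instead be relocated to any unused point of $X\setminus\{x_0,x_n\}$ without altering the area. Such a free position always exists because the hypothesis $k<n$ gives $k-1<n-1$ intermediate points of $X$ to choose from, so only finitely many iterations are required and the process terminates with the desired $M^*_k$.
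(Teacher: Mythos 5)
Your argument for property~1 is essentially the paper's: the area is a continuous piecewise affine function of a single interior breakpoint with knots on $X$, so breakpoints can be snapped to $X$ without increasing the area (the paper organizes the iteration as an extremal argument minimizing $|T\setminus X|$ rather than a left-to-right sweep, which also disposes of termination). Two caveats there: the minimum of that piecewise affine function is attained at a knot \emph{or at an endpoint $t_{j-1}$, $t_{j+1}$ of the admissible interval}, and your fallback of ``relocating $t_j$ to any unused point of $X$'' is not area-preserving as stated---moving a breakpoint outside $(t_{j-1},t_{j+1})$ reshuffles which pitches cover which times. To make it work you must merge the degenerate segment away and re-split some other segment at an unused $X$-point, assigning the new piece the pitch of the segment it subdivides. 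These issues are repairable.

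The genuine gap is property~2. In this paper ``each segment of $M^*_k$ contains at least a segment of $R$'' means geometric containment of horizontal segments in the time--pitch plane: the note of $M^*_k$ must have the \emph{same pitch} as some note of $R$ whose time span it covers. This is what the definition of a $k$-compressed melody, Figure~\ref{fig:compressing}(b), and Lemma~\ref{lem:dp-1} rely on---it is precisely what allows the dynamic program to restrict the compressed pitches to the set $P$ and reach $O(k\rho n)$ time. Your proof fixes the pitches $\pi_1,\dots,\pi_k$ at the outset and then reads property~2 as mere time containment, which is automatic but is not what is claimed. An optimal $M_k$ can have a segment whose pitch equals no pitch of $R$ beneath it: a segment covering two equal-length notes of $R$ at pitches $1$ and $3$ achieves the same cost for any pitch in $[1,3]$, e.g.\ pitch $2$. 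The paper's proof therefore has a second stage that your proposal lacks: for a segment containing no note of $R$, its area contribution as a function of its pitch is convex piecewise linear with breakpoints at the pitches of the covered notes of $R$, so it can be moved vertically to the nearest such pitch without increasing the area, raising the number of ``containing'' segments; an extremal argument on that count then finishes the proof.
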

\begin{proof}
Let $\mathcal{M}_k$ be the set of all the melodies formed by $k$ notes starting at time $x_0$ and ending at time $x_n$.
We first prove that there exists a melody in $\mathcal{M}_k$ with time partition $T\subseteq X$ that minimizes the difference in area from $R$. For the sake of contradiction, suppose that no such melody exists. Let $M_k\in \mathcal{M}_k$ be the melody that minimizes the difference in area from $R$ such that $\left|T\setminus X\right|$ is minimal. Due to our assumption, $\left|T\setminus X\right|>0$. Then there is a value $0<j<k$ and a value $0\leq i<n$ such that $x_i<t_j<x_{i+1}$; see Figure \ref{fig:area_diff}. It is easy to check that shifting $t_j$ toward $x_i$ or toward $x_{i+1}$ we get a melody $M_k'\in \mathcal{M}_k$ whose difference in area with respect to $R$ is less than or equal to the difference in area between $R$ and $M_k$. Moreover, denoting by $T'$ the time partition of $M_k'$, we have that $\left|T'\setminus X\right|<\left|T\setminus X\right|$, a contradiction.

\begin{figure}
    \centering
    \includegraphics[scale=.7]{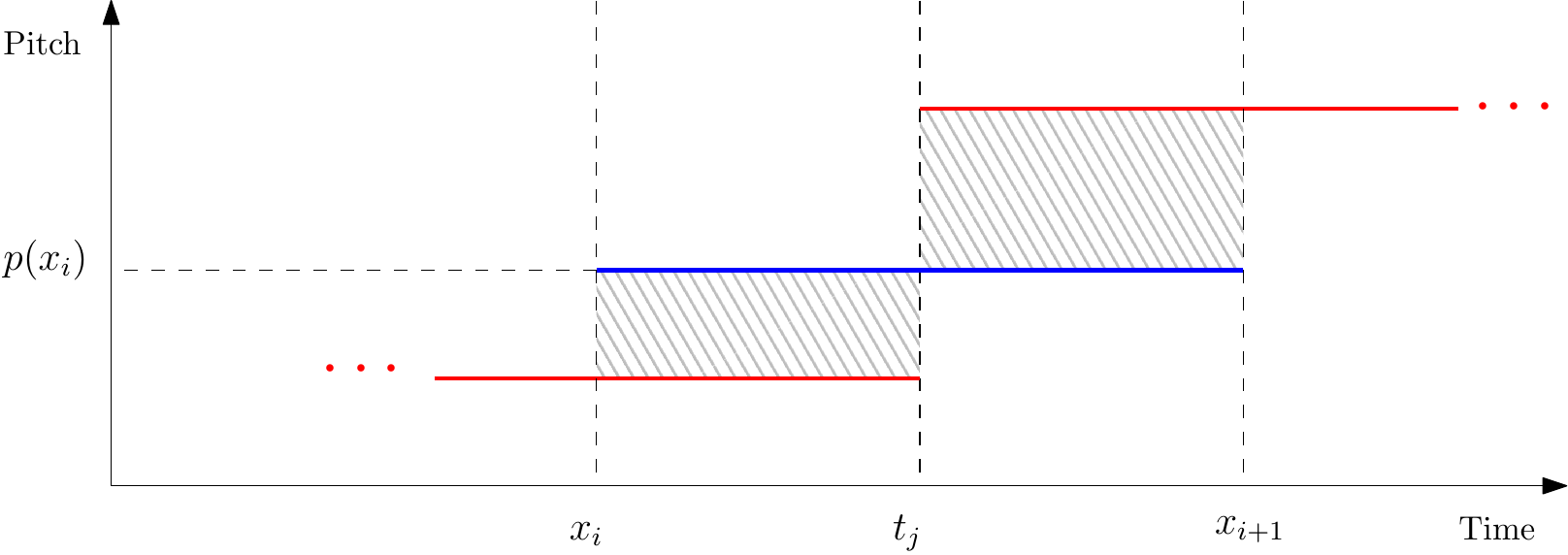}
    \caption{Illustration of Lemma \ref{lem:technical}. By shifting $t_j$, toward $x_{i+1}$ the area between the melodies is decreased.}
    \label{fig:area_diff}
\end{figure}

Let $\mathcal{M}^*_k\subseteq \mathcal{M}_k$ be the set of all the melodies whose time partition is a subset of $X$ and for which the difference in area with $R$ is minimum. We now prove that there exists a melody in $\mathcal{M}^*_k$ in which each note contains a note of $R$. For the sake of contradiction, suppose that no such melody exists. Let $M_k$ be a melody in $\mathcal{M}^*_k$ with a maximum number $q$ of notes containing at least one note of $R$.  Due to our assumption, $k-q>0$, and $k-q$ is minimal. Suppose that the segment $Y$  of $M_k$ between times $t_i$ and $t_{i+1}$ does not contain a note of $R$. It is easy to check that lowering or raising $Y$ until the first segment  of $R$ in the interval $[t_i,t_{i+1}]$ is below or above $Y$, respectively, we obtain a melody $M_k'$ whose difference in area from $R$ is at most $A$ and $q+1$ notes of $M_k'$ contain at least one note of $R$, a contradiction.
\end{proof}

From the above lemma, the following result is directly deduced.
\begin{corollary}\label{cor:j-compression}
Let $R$ be a melody  in  segment representation with time partition $X=(x_0=0,\dots,x_n)$. For every  $k < n$, there exists a $k$-compressed melody of $R$ with time partition $T=(t_0=x_0,\dots,t_k=x_n)$ that minimizes the difference in area from $R$ over all the melodies formed by $k$ notes starting at time $x_0$ and ending at time $x_n$.
\end{corollary}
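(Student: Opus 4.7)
My plan is to show that the corollary follows immediately from Lemma \ref{lem:technical} by matching its conclusion against the definition of a $k$-compressed melody given in Section 1.1. First, I would invoke Lemma \ref{lem:technical} on the input melody $R$ with time partition $X=(x_0,\dots,x_n)$ and the integer $k<n$. The lemma supplies a melody $M_k^*$ with $k$ segments, starting at time $x_0$ and ending at time $x_n$, that achieves the minimum area difference from $R$ over the whole class of $k$-note melodies with the same duration, and that in addition satisfies properties (1) and (2): its time partition $T=(t_0=x_0,\dots,t_k=x_n)$ is contained in $X$, and each of its segments contains at least one segment of $R$.

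Next, I would observe that these two properties, together with the boundary conditions $t_0=x_0$ and $t_k=x_n$, certify that $M_k^*$ is itself a $k$-compressed melody of $R$. Indeed, the definition of a $k$-compression in segment representation requires exactly that the melody has $k$ segments spanning the same duration as $R$ and that each segment contains at least one segment of $R$; property (2) of the lemma furnishes the latter condition, while property (1) ensures the time partition is drawn from $X$ and is a stronger structural feature that is not demanded by the definition but is perfectly compatible with it.

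With these two observations in place the corollary follows: $M_k^*$ is a $k$-compressed melody which attains the minimum area difference from $R$ taken over the full, unrestricted class of $k$-note melodies with duration $x_n-x_0$. I do not anticipate any real obstacle here, since the argument is essentially a matter of recognizing that the output of Lemma \ref{lem:technical} already fits the $k$-compression template. The only subtlety worth underlining is that the optimization in the corollary is phrased over \emph{all} $k$-note melodies of the prescribed duration (a class strictly larger than the $k$-compressions), so the content of the corollary is precisely that restricting to $k$-compressions loses nothing; this is exactly what the lemma provides through its constructive exhibition of $M_k^*$.
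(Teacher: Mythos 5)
Your proposal is correct and matches the paper exactly: the paper states that the corollary is ``directly deduced'' from Lemma \ref{lem:technical}, and your argument simply spells out that deduction, namely that the melody $M^*_k$ produced by the lemma already satisfies the definition of a $k$-compressed melody (same duration as $R$, each segment containing a segment of $R$) while being optimal over the larger unrestricted class. No gaps; the observation that property (1) is extra structure beyond what the definition of a $k$-compression requires is a correct and harmless remark.
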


The above results justify the definition of a $k$-compressed melody and allow us to look for a solution of Problem \ref{prob:compressing-area} that fulfills properties 1 and 2 of Lemma~\ref{lem:technical}.

\begin{definition} Let $R$ be a melody  with time partition $X=(x_0,\dots, x_n)$. For every $x_i\in X$,  we denote by $R_{\overleftarrow{x_i}}$  the \emph{prefix melody} of $R$ with time partition $X_{\overleftarrow{x_i}}=(x_0,\dots,x_i)$.

\end{definition}

\begin{definition} \label{def:sp}
 Let $R$ be a melody  with time partition $X=(x_0,\dots, x_n)$ and let $P$ be the set of different pitch values among the notes of $R$. Fixing  $k<n$ and three values $1\leq j\leq k$, $j\leq i\leq n$ and $p\in P$, let $S_i[j,p]$ be the set of all the melodies formed by $j$ segments starting at time $x_0$ and ending at time $x_i$ whose last segment has pitch $p\in P$ and starts at some time $x_{i'}\in X$ with $i'<i$. See Figure~\ref{fig:sp-2} for an illustration.
\end{definition}

\begin{figure}[ht]
\centering
\includegraphics[scale=0.6]{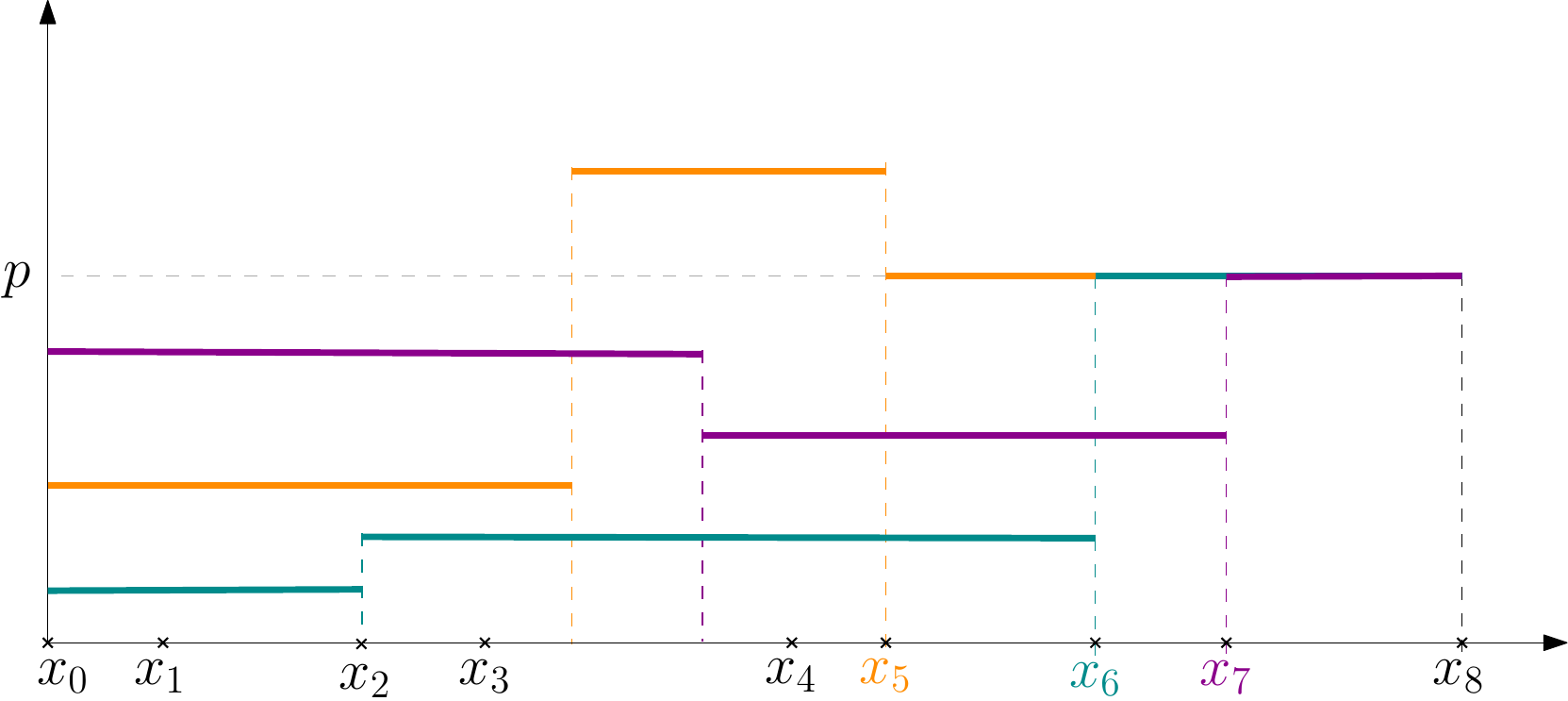}
\caption{Green, purple and orange melodic contours are in $S_8[3,p]$.}
\label{fig:sp-2}
\end{figure}

Note that a  $k$-compressed melody, $C_k(R)$  whose last note has pitch $\hat{p} \in P$ belongs to  $S_n[k,\hat{p}]$.

 In the following we propose a dynamic programming approach in order to compute an optimum $k$-compression of $R$.
The next result establishes an optimal substructure of the $k$-compressed melodies.

\begin{lemma}\label{lem:compression-prefix}
There exists a melody $C\in S_i[j,p]$ that minimizes the difference in area from $R_{\overleftarrow{x_i}}$ over all the melodies in $S_i[j,p]$, and
whose prefix $C_{\overleftarrow{x_{i'}}}$ is an optimum $(j-1)$-compressed melody of $R_{\overleftarrow{x_{i'}}}$, where $x_{i'}$ is the starting point of the last segment of $C$.
\end{lemma}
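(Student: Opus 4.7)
The plan is a standard cut-and-paste (exchange) argument that exploits the additivity of the area measure over disjoint time intervals. The key observation is that, for any melody $C\in S_i[j,p]$ whose last segment starts at $x_{i'}$, the area between $R_{\overleftarrow{x_i}}$ and $C$ splits into the area between $R_{\overleftarrow{x_{i'}}}$ and the prefix $C_{\overleftarrow{x_{i'}}}$ plus the area on the interval $[x_{i'},x_i]$ between $R$ and the last (constant-pitch $p$) segment of $C$. Since $p$ and the interval $[x_{i'},x_i]$ are fixed, the second summand depends only on these quantities and is unaffected by any modification of the prefix.

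First I would fix an optimum $C^{*}\in S_i[j,p]$, write $C^{*} = C^{*}_{\overleftarrow{x_{i'}}} \cup L$, where $L$ is its last segment on $[x_{i'},x_i]$ with pitch $p$, and observe that $C^{*}_{\overleftarrow{x_{i'}}}$ is a melody of $j-1$ segments spanning $[x_0,x_{i'}]$. I would then argue by contradiction: suppose $C^{*}_{\overleftarrow{x_{i'}}}$ is not an optimum $(j-1)$-compressed melody of $R_{\overleftarrow{x_{i'}}}$, and let $C'$ be one that beats it in area. The concatenated melody $C:=C'\cup L$ belongs to $S_i[j,p]$ (the last segment still has pitch $p$ and starts at $x_{i'}\in X$), and by the additive decomposition the area between $R_{\overleftarrow{x_i}}$ and $C$ is strictly smaller than the area between $R_{\overleftarrow{x_i}}$ and $C^{*}$, contradicting the optimality of $C^{*}$. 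Hence some optimum of $S_i[j,p]$ has its prefix equal to an optimum $(j-1)$-compressed melody.

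To make the previous step clean I would invoke Lemma \ref{lem:technical} and Corollary \ref{cor:j-compression} on the prefix melody $R_{\overleftarrow{x_{i'}}}$ with $j-1$ segments: these guarantee that among all $(j-1)$-segment melodies on $[x_0,x_{i'}]$, the minimum area difference is attained by a genuine $(j-1)$-compressed melody of $R_{\overleftarrow{x_{i'}}}$ (time partition in $X$, each segment containing a note of $R$). This is precisely what lets the exchange argument end with a compressed prefix rather than with an arbitrary $(j-1)$-segment approximation.

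The main obstacle I anticipate is purely a bookkeeping check: one must verify that $C=C'\cup L$ is indeed a valid element of $S_i[j,p]$ with exactly $j$ segments, including the degenerate case in which the last segment of $C'$ happens to have pitch $p$. In that case $C$ contains two consecutive segments of equal pitch; Definition \ref{def:sp} counts segments syntactically and does not force merging, so the construction is still admissible and the lemma follows.
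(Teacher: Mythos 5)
Your proposal is correct and follows essentially the same route as the paper: an exchange argument that splits the area additively over $[x_0,x_{i'}]$ and $[x_{i'},x_i]$ and invokes Corollary~\ref{cor:j-compression} to replace the prefix by a genuine optimum $(j-1)$-compression of $R_{\overleftarrow{x_{i'}}}$ without increasing the total, exactly as in the paper's proof. One small wording caveat: since the prefix of $C^{*}$ may already attain the minimum area while failing to be a compression, the substitute prefix need only have area \emph{not exceeding} (rather than strictly beating) that of the original, so the argument is better phrased as a direct swap than as a strict contradiction --- which is in fact how the paper states it.
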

\begin{proof}
Let $C'\in S_i[j,p]$ be a melody that minimizes the difference in area from $R_{\overleftarrow{x_i}}$ over all the melodies in $S_i[j,p]$.
If $C'_{\overleftarrow{x_{i'}}}$ is an optimum $(j-1)$-compression of $R_{\overleftarrow{x_{i'}}}$ then take $C=C'$ and we are done. Otherwise, by Corollary \ref{cor:j-compression}, there exists a $(j-1)$-compression $C_{j-1}$ of $R_{\overleftarrow{x_{i'}}}$ that minimizes the difference. Using $C_{j-1}$ plus the last segment of $C'$, we obtain a melody $C''$ which is also in $S_i[j,p]$ and with minimum difference in area; thus take $C=C''$.
\end{proof}

 Figure~\ref{fig:lema24} shows an example to illustrate Lemma~\ref{lem:compression-prefix}.

\begin{figure}[ht]
\centering
\includegraphics[scale=0.7]{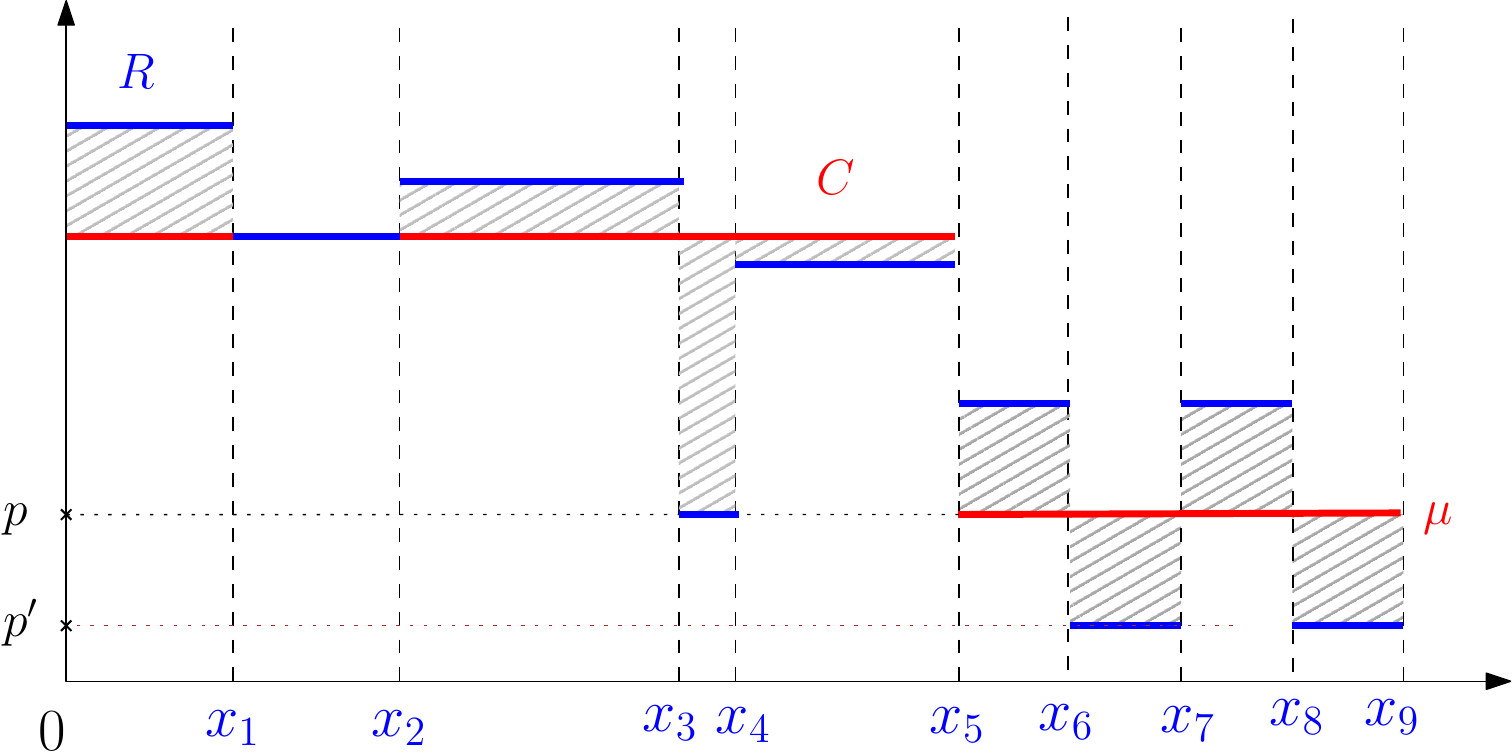}
\caption{The melody $C$ belongs to $S_{9}[2,p]$ and it is not a $2$-compression of $R$ because its last segment does not contain any segment of $R$. The prefix $C_{\protect\overleftarrow{x_5}}$ is an optimum 1-compression of $R_{\protect\overleftarrow{x_5}}$.}
\label{fig:lema24}
\end{figure}

\begin{definition}\label{def:A,T}
Let $C\in S_i[j,p]$ be a melody that minimizes the difference in area from $R_{\overleftarrow{x_i}}$ over all the melodies in $S_i[j,p]$.
We denote by
$A_i[j,p]$ the difference in area between $R_{\overleftarrow{x_i}}$ and $C$ and by $T_i[j,p]=i'$ the index of the starting time of the last segment in $C$.
\end{definition}

\begin{lemma}\label{lem:dp-1}
Let $p^*\in P$ be the minimum pitch value such that $A_i[j,p^*]=\displaystyle\min_{  p\in P}A_i[j,p]$. There  exists a melody $C\in S_i[j,p^*]$ which is an optimal $j$-compression of $R_{\overleftarrow{x_i}}$.
\end{lemma}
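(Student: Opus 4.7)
The plan is to show that $A_i[j,p^*]$ equals the cost of an optimal $j$-compression of $R_{\overleftarrow{x_i}}$ and that this value is attained by a melody in $S_i[j,p^*]$ which is itself a valid $j$-compression. First, by Lemma~\ref{lem:compression-prefix}, I would fix a melody $C \in S_i[j,p^*]$ that minimizes the area difference from $R_{\overleftarrow{x_i}}$ and whose prefix $C_{\overleftarrow{x_{i'}}}$ is an optimum $(j-1)$-compressed melody of $R_{\overleftarrow{x_{i'}}}$, where $x_{i'}$ is the starting time of the last segment of $C$. Because that prefix is a valid $(j-1)$-compression, its $j-1$ segments already each contain a segment of $R$, so the only thing left is to show that the last segment of $C$ (at pitch $p^*$ over the interval $[x_{i'}, x_i]$) also contains a segment of $R$.

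To that end, I plan to view the area contributed by the last segment, as a function of its pitch $p$ alone, as
\[
A(p) \;=\; \sum_{k\,:\,[x_{k-1},x_k]\subseteq[x_{i'},x_i]} d_k\,|p - p_k|,
\]
where $d_k$ and $p_k$ denote the duration and pitch of $R_k$. This function is convex and piecewise linear in $p$, and its set of minimizers is a closed interval $[p_{lo}, p_{hi}]$ whose endpoints are pitches of notes of $R$ lying inside $[x_{i'}, x_i]$ (the standard weighted $\ell_1$-median fact). Suppose, for contradiction, that $p^*$ is not the pitch of any such note, so in particular $p^* \notin \{p_{lo}, p_{hi}\}$. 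If $p^*$ lies strictly outside $[p_{lo}, p_{hi}]$, then replacing the pitch of the last segment of $C$ by the nearer endpoint produces a melody in some $S_i[j, p']$ with \emph{strictly} smaller area, contradicting the minimality $A_i[j,p^*] = \min_p A_i[j,p]$. If instead $p_{lo} < p^* < p_{hi}$, replacing the pitch by $p_{lo}$ produces a melody in $S_i[j,p_{lo}]$ with the \emph{same} area, so $A_i[j,p_{lo}] \leq A_i[j,p^*]$; since $p_{lo} < p^*$, this contradicts the choice of $p^*$ as the smallest pitch attaining $\min_p A_i[j,p]$. Hence $p^*$ must be the pitch of some note of $R$ inside $[x_{i'}, x_i]$, so the last segment of $C$ contains a segment of $R$, and $C$ is a valid $j$-compression.

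It remains to conclude that $C$ is optimal among all $j$-compressions of $R_{\overleftarrow{x_i}}$. For any $j$-compression $C'$ with last pitch $p'$, one has $C' \in S_i[j,p']$, so its area difference from $R_{\overleftarrow{x_i}}$ is at least $A_i[j,p'] \geq A_i[j,p^*]$, which equals the area of $C$; hence $C$ attains the minimum cost among all $j$-compressions. The main obstacle I anticipate is the case analysis above: it must simultaneously exploit the value-minimality of $A_i[j,p^*]$ (for the outside-the-interval case) and the pitch-minimality tiebreak in the definition of $p^*$ (for the inside-the-interval case), and it relies crucially on identifying the argmin of $A(p)$ as an interval whose endpoints are pitches of notes of $R$ that lie within the time interval of the last segment of $C$.
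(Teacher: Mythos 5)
Your proof is correct and follows essentially the same strategy as the paper's: assume the last segment of an area-minimizing melody in $S_i[j,p^*]$ contains no segment of $R$, and derive a contradiction either from the value-minimality of $A_i[j,p^*]$ (when a strict improvement is possible by moving the segment's pitch) or from the pitch-minimality of $p^*$ (when an equally good, strictly lower pitch exists) --- your weighted-$\ell_1$-median interval $[p_{lo},p_{hi}]$ is just a repackaging of the paper's comparison of the total lengths $L_b$ and $L_a$ of the $R$-segments below and above $\mu$. The one genuine difference is that you also invoke Lemma~\ref{lem:compression-prefix} to guarantee that the first $j-1$ segments each contain a segment of $R$, a point the paper's proof leaves implicit (it only verifies the containment condition for the last segment), so your version is slightly more complete.
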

\begin{proof}
From Lemma \ref{lem:technical}, Definition \ref{def:A,T} and the assumptions on $p^*$, it is easy to deduce that the difference in area between $R_{\overleftarrow{x_i}}$ and its optimum $j$-compression is $A_i[j,p^*]$.

Let $C\in S_i[j,p^*]$ be a melody whose difference in area from $R_{\overleftarrow{x_i}}$ is $A_i[j,p^*]$. Let $\mu$ denote the last segment of $C$. If $\mu$ contains some segment of $R_{\overleftarrow{x_i}}$ then $C$ is a $j$-compression and we are done.  For the sake of contradiction, assume that $\mu$ does not contain any segment of $R_{\overleftarrow{x_i}}$.
Let $L_b$ and $L_a$ be the sum of the lengths of the segments below and above $\mu$ respectively.

If $L_b\geq L_a$, then lowering $\mu$ to the segment immediately below, we get a new melody $C'$ whose last note has a pitch value less than $p^*$ and the difference in area between $C'$ and $R_{\overleftarrow{x_i}}$ is at most $A_i[j,p^*]$. This contradicts that $p^*$ is minimum.

If $L_b< L_a$, then raising $\mu$ to the segment immediately above, we get a new melody $C'$ whose difference in area with respect to $R_{\overleftarrow{x_i}}$ is less than $A_i[j,p^*]$. This contradicts that $A_i[j,p^*]$ is minimal and the result follows.
\end{proof}

\begin{definition}
Let $A^*[i,j]$ denote the difference in area of an optimum $j$-compression $C$ of $R_{\overleftarrow{x_i}}$. Let $T^*[i,j]=i'$ denote the index where the last segment of $C$ starts (i.e., the last segment of $C$ starts at $x_{i'}\in X$ with $i'<i$; Lemma \ref{lem:technical}). Let $H^*[i,j]=p$ denote the pitch of last segment of $C$.
\end{definition}

Thinking in dynamic programming terms, for every $1\leq i\leq n$, consider $A_i$ and $T_i$ as $k\times \rho$ tables. Similarly, $A^*$, $T^*$ and $H^*$ are $n\times k$ tables. Then, from Lemma~\ref{lem:dp-1}, the following result is deduced.

\begin{corollary}\label{cor:dp}
For a fixed value $i$, $1\leq i \leq n$, assume that tables $A_i$ and $T_i$ are already computed. Then the $i$-th row of tables $A^*$, $T^*$ and $H^*$ can be computed in $O(k\rho)$ time (where $\rho=|P|$) as follows.

Let $p^*\in P$ be the minimum pitch value such that $A_i[j,p^*]=\min_{ p\in P}A_i[j,p]$. Then
\[H^*[i,j]=p^*,\;\; A^*[i,j]=A_i[j,p^*]\;\text{ and }\;T^*[i,j]=T_i[j,p^*].\]

\end{corollary}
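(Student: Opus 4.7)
The plan is to derive Corollary \ref{cor:dp} as a near-immediate bookkeeping consequence of Lemma \ref{lem:dp-1}, followed by a brief time-complexity count. The correctness argument will proceed via two matching inequalities, and the runtime part will be a straightforward scan.

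First, I would fix $j \in \{1,\ldots,k\}$ and let $p^*$ be the minimum pitch attaining $\min_{p\in P} A_i[j,p]$. By Lemma \ref{lem:dp-1}, there exists a melody $C$ that is simultaneously in $S_i[j,p^*]$ and an optimum $j$-compression of $R_{\overleftarrow{x_i}}$. Inspecting the construction in the proof of Lemma \ref{lem:dp-1}, this $C$ can be chosen to attain the area difference $A_i[j,p^*]$ promised by Definition \ref{def:A,T}, which yields $A^*[i,j] \leq A_i[j,p^*]$. In the reverse direction, I would argue that any $j$-compression of $R_{\overleftarrow{x_i}}$ with last-segment pitch $\hat{p}$ lies in $S_i[j,\hat{p}]$ (its last segment, being a compressed note, starts at some $x_{i'}\in X$ by Lemma~\ref{lem:technical}), hence has area difference at least $A_i[j,\hat{p}] \geq A_i[j,p^*]$. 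Combining these inequalities gives $A^*[i,j] = A_i[j,p^*]$. Recording the pitch and the last-segment starting index of this same $C$ then validates the settings $H^*[i,j]=p^*$ and $T^*[i,j]=T_i[j,p^*]$.

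For the running time, for each fixed $j$ I would scan the column $(A_i[j,p])_{p\in P}$ once, maintaining a running minimum together with the least pitch achieving it; this takes $O(\rho)$ time. The remaining lookup $T_i[j,p^*]$ is $O(1)$ once $p^*$ is known. Iterating over $j=1,\ldots,k$ fills the $i$-th rows of $A^*$, $H^*$ and $T^*$ in $O(k\rho)$ time, which is the claimed bound.

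The main subtlety, and the only non-mechanical step, is the tie-breaking rule that selects the \emph{minimum} pitch among those achieving the column minimum. A different tie-breaker could land on a pitch $p$ for which no optimum $j$-compression belongs to $S_i[j,p]$, and then propagating $T_i[j,p]$ into the recursion would record an inconsistent ``last segment start'' for later rows. Lemma \ref{lem:dp-1} is precisely the statement that the minimum-pitch choice is safe, so once that lemma is invoked, the corollary reduces to reading three values off the already-computed tables $A_i$ and $T_i$.
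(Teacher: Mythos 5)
Your argument is correct and matches the paper's intent exactly: the paper states this corollary as a direct consequence of Lemma~\ref{lem:dp-1} without further proof, and your two inequalities plus the $O(k\rho)$ scan are precisely the deduction being elided. Your observation about the minimum-pitch tie-breaker being what Lemma~\ref{lem:dp-1} certifies is also on point.
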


Finally, the following lemma establishes the dynamic programming recurrence.

\begin{lemma}\label{lem:dp-2}
Fix a value $1\leq i\leq n$. Assume that tables $A_i$ and $T_i$ are already computed. Assume that the $i$-th row of tables $A^*$ and $T^*$ are already computed as well. Then tables $A_{i+1}$ and $T_{i+1}$ can be obtained as follows:
\begin{gather*}
A_{i+1}[j,p]=\min\{A_i[j,p]+A^{(p)},A^*[i,j-1]\} \\
T_{i+1}[j,p]=\left\{
\begin{array}{cl}
    T_{i}[j,p] & \text{if\; $A_i[j,p]+A^{(p)}\leq A^*[i,j-1]$,} \\
    i & \text{otherwise,}
\end{array}
\right.
\end{gather*}
where $A^{(p)}$ denotes the area between the segment $[x_i,x_{i+1}]$
of $R$  and the last
compressing segment with pitch $p$,
(i.e., $A^{(p)}=\left|p-R(x_i)\right|\cdot(x_{i+1}-x_i)$)
\end{lemma}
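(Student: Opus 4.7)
The plan is to prove the recurrence by case analysis on the starting index $i''\leq i$ of the last segment of an optimal melody $C\in S_{i+1}[j,p]$ that attains $A_{i+1}[j,p]$. Since every such $C$ has its last segment of pitch $p$ covering at least the interval $[x_i,x_{i+1}]$, the area between $C$ and $R_{\overleftarrow{x_{i+1}}}$ decomposes naturally as the area contribution over $[x_0,x_i]$ plus the contribution over $[x_i,x_{i+1}]$, and the latter equals $A^{(p)}=|p-R(x_i)|\cdot(x_{i+1}-x_i)$.

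First, in the case $i''<i$, the last segment of $C$ spans both $[x_{i-1},x_i]$ and $[x_i,x_{i+1}]$. Truncating it at $x_i$ produces a melody $C'$ with $j$ segments, ending at $x_i$, whose last segment still has pitch $p$ and starts at $x_{i''}$ with $i''<i$; hence $C'\in S_i[j,p]$. Conversely, any member of $S_i[j,p]$ can be extended to $S_{i+1}[j,p]$ by prolonging its last segment to $x_{i+1}$, which adds exactly $A^{(p)}$ to the area. Optimality of $C$ then forces $C'$ to attain $A_i[j,p]$, so the optimum in this branch is $A_i[j,p]+A^{(p)}$ and the starting index of the last segment is $T_i[j,p]$.

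In the case $i''=i$, the last segment occupies exactly $[x_i,x_{i+1}]$, so the prefix $C_{\overleftarrow{x_i}}$ has $j-1$ segments ending at $x_i$ with time partition inside $X_{\overleftarrow{x_i}}$. By Corollary~\ref{cor:j-compression} and the exchange step used in Lemma~\ref{lem:compression-prefix}, we may replace $C_{\overleftarrow{x_i}}$ by an optimum $(j-1)$-compression of $R_{\overleftarrow{x_i}}$ without increasing the area, so the optimum in this branch is $A^*[i,j-1]+A^{(p)}$ with new starting index $i$ (the formula in the statement is most naturally read as bundling the common $A^{(p)}$ summand into its second argument). Taking the minimum over the two branches then gives the value of $A_{i+1}[j,p]$, and recording which branch wins yields the stated update for $T_{i+1}[j,p]$.

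The main obstacle will be making both exchange arguments airtight: verifying that substituting the optimum subtable value into the prefix of $C$ always returns a valid element of $S_{i+1}[j,p]$ with the correct last-segment pitch $p$ and starting index, so that the minimum is indeed achieved and not merely bounded. The first case is essentially a truncation/extension bijection between $S_i[j,p]$ and the relevant slice of $S_{i+1}[j,p]$; the second piggybacks directly on Lemma~\ref{lem:compression-prefix}. A minor bookkeeping step is to initialize entries of $A_i$ and $A^*$ corresponding to infeasible configurations (e.g., $j>i$, or pitch values $p$ unrealizable in $S_i[j,p]$) to $+\infty$ so that the $\min$ silently discards them; after that, the recurrence follows from the decomposition above.
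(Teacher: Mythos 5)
Your proof is correct and follows essentially the same route as the paper's: a case analysis on whether the last segment of the optimal melody in $S_{i+1}[j,p]$ starts before $x_i$ (truncate/extend, adding $A^{(p)}$) or exactly at $x_i$ (invoke Lemma~\ref{lem:compression-prefix} to replace the prefix by an optimum $(j-1)$-compression of $R_{\overleftarrow{x_i}}$). You are also right to be suspicious of the second branch: the new last segment with pitch $p$ still covers $[x_i,x_{i+1}]$ and contributes $A^{(p)}$ there, so the recurrence should read $A_{i+1}[j,p]=\min\{A_i[j,p],A^*[i,j-1]\}+A^{(p)}$; the paper's statement and proof omit this $+A^{(p)}$ in the $A^*[i,j-1]$ branch, which is a genuine error rather than a harmless convention, since as written the two branches are not compared on equal footing (the wrong branch, and the wrong value, can be selected whenever $A_i[j,p]<A^*[i,j-1]<A_i[j,p]+A^{(p)}$).
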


\begin{proof}
Let $C$ be the melody with minimum difference in area from $R_{\overleftarrow{x_{i+1}}}$ formed by $j$ segments starting at time $x_0$ and ending at time $x_{i+1}$ whose last segment has pitch $p\in P$ and starts at some time $x\in X$ with $x<x_{i+1}$.

If the last segment of $C$ starts at time $x_{i'}<x_i$, then the difference in area between $C_{\overleftarrow{x_i}}$ and $R_{\overleftarrow{x_i}}$ is $A_i[j,p]$. Therefore, the difference in area between $C$ and $R_{\overleftarrow{x_{i+1}}}$ is $A_{i+1}[j,p]=A_i[j,p]+A^{(p)}$ and its last segment starts at time $T_{i+1}[j,p]=T_i[j,p]=i'$. In this case, the last segment of the previous compression is extended to  $x_{i+1}$.

If the last segment of $C$ starts at time $x_i$, then obviously $T_{i+1}[j,p]=i$. Moreover, by Lemma~\ref{lem:compression-prefix} we have that $C_{\overleftarrow{x_i}}$ is a $(j-1)$-compression of $R_{\overleftarrow{x_i}}$ with minimum difference in area $A^*[i,j-1]$. Therefore, the difference in area between $C$ and $R_{\overleftarrow{x_{i+1}}}$  is $A_{i+1}[j,p]=A^*[i,j-1]$ and the result follows.
\end{proof}

Using Corollary \ref{cor:dp} and Lemma \ref{lem:dp-2} we are ready to implement a dynamic programming algorithm to fill tables $A^*$, $T^*$ and $H^*$, from which we can obtain the $k$-compression with minimum difference in area.

\begin{theorem}
    An optimum $k$-compressed melody of $R$ can be found in $O(k\rho n)$ time and using $O(kn)$ memory space, where $n$ is the number of segments in  melody $R$ and $\rho$ is the number of different pitch values of the segments in $R$.
\end{theorem}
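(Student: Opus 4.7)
The plan is to implement a dynamic programming algorithm that walks $i$ from $1$ to $n$ and, in each iteration, simultaneously updates the auxiliary tables $A_i, T_i$ (of size $O(k\rho)$) and fills one row of the output tables $A^*, T^*, H^*$ (of size $O(kn)$). Correctness follows directly from Lemmas~\ref{lem:compression-prefix} and~\ref{lem:dp-2} together with Corollary~\ref{cor:dp}; the final optimum cost is read off as $A^*[n,k]$, and the compressed melody itself is recovered by a short backtracking pass through $T^*$ and $H^*$.

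First I would initialise $A_1[1,p]=|p-p(x_0)|\cdot(x_1-x_0)$ and $T_1[1,p]=0$ for every $p\in P$, corresponding to the unique melody in $S_1[1,p]$; the remaining entries $A_1[j,p]$ with $j>1$ are set to $+\infty$ so that they are never selected. Then, for each $i=1,\dots,n$, I would carry out two $O(k\rho)$-time steps in sequence: first, compute row $i$ of $A^*,T^*,H^*$ from $A_i,T_i$ by choosing, for each $j$, the smallest pitch $p^*$ minimising $A_i[j,p]$ over $p\in P$, exactly as prescribed by Corollary~\ref{cor:dp}; second, when $i<n$, fill $A_{i+1}$ and $T_{i+1}$ by applying the recurrence of Lemma~\ref{lem:dp-2}, where the local contribution $A^{(p)}=|p-p(x_i)|\cdot(x_{i+1}-x_i)$ is computed in $O(1)$ time per $(j,p)$ entry.

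Once all rows are filled, the minimum area difference is $A^*[n,k]$, and the optimum $k$-compression is recovered in $O(k)$ time: its last segment has pitch $H^*[n,k]$ and starts at $x_{T^*[n,k]}$; iterating with indices $(T^*[n,k],k-1)$, then $(T^*[T^*[n,k],k-1],k-2)$, and so on, yields all supporting pitches and starting times. Totalling $n$ outer iterations of $O(k\rho)$ work each gives the claimed $O(k\rho n)$ running time. The dominant space cost is the $O(kn)$ used to store $A^*, T^*, H^*$; the auxiliary $A_i, T_i$ need only be kept for two consecutive values of $i$, which adds $O(k\rho)\subseteq O(kn)$ space since $\rho\leq n$.

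The main delicacy, rather than a genuine obstacle, will be the boundary handling: defining $A^*[i,0]=0$ when $i=0$ and $+\infty$ otherwise so that the $j=1$ case of Lemma~\ref{lem:dp-2} is expressed cleanly, and ensuring that the $+\infty$ sentinels in $A_1$ propagate correctly so that every minimum picked by Corollary~\ref{cor:dp} and by the recurrence is actually attained by a real melody in the corresponding set $S_i[j,p]$.
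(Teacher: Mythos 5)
Your proposal is correct and follows essentially the same route as the paper: the same dynamic program that alternates between updating the $O(k\rho)$-size auxiliary tables $A_i,T_i$ via Lemma~\ref{lem:dp-2} and filling the $i$-th row of $A^*,T^*,H^*$ via Corollary~\ref{cor:dp}, with the same $O(k\rho n)$ time bound, the same $O(k)$ backtracking recovery, and the same observation that only two consecutive auxiliary tables need be kept so the space is dominated by the $O(kn)$ starred tables. Your added detail on initialisation and $+\infty$ sentinels is a harmless elaboration of boundary cases the paper leaves implicit.
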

\begin{proof}
Note that the size of tables $T_i$ and $A_i$ is $O(k\rho)$, and computing a cell of these tables takes constant time assuming that the values of tables $T_{i-1}$ and $A_{i-1}$ as well as the values of the cells in the $(i-1)$-th row of $T^*$, $A^*$ and $H^*$ are known.

Now, the size of tables $T^*$, $A^*$ and $H^*$ is $O(kn)$ and computing a cell of the $i$-th row of these tables takes $O(\rho)$ time assuming that we already know the values of tables $T_i$ and $A_i$.
Since we need to repeat the process for $i=1,\dots,n$, the total computation time is $O(k \rho n)$.

Note that the difference in area from the optimal $k$-compression is in $A^*[k,n]$ and it can be obtained in $O(k)$ time by navigating backwards using the tables $T^*$ and $H^*$ (i.e., the last segment starts at $x_{i_k}$ time where $T^*[k, n]=i_k$ and its pitch is $H^*[k, n]$, the $(k-1)$-th segment starts at $x_{i_{k-1}}$ time where $T^*[k-1, i_k]=i_{k-1}$ and its pitch is $H^*[k, i_k]$, etc.).

Finally, tables $A_1,A_2,\dots,A_n$ can be modeled as a single table of size $k\times \rho$ that updates its cell values for $i=1,2\dots,n$. The same occurs for $T_1,T_2,\dots,T_n$. Then these two sequences of tables can be modeled using two single tables of size $k\times \rho$, which require $O(k\rho )$ memory space. Tables $T^*$, $A^*$ and $H^*$ requires $O(kn)$ memory space. Taking into account that $\rho \leq n$, the result follows.
\end{proof}

\section{Conclusions}\label{sec:conclu}

In this work, we developed efficient algorithms for
two geometric problems that arise in MIR; linear scaling and data compression.
The geometric nature of the problem comes from the use of two geometric metrics to measure the similarity between two melodies.
Our algorithms take advantage of the properties of an optimal solution and are based on line sweeping and dynamic programming techniques.
For a segment representation of the melodies, we considered the difference in area between the melodic contours and for the point representation we introduced a constrained matching, the $t$-monotone matching, in which a point of a melody is assigned to its neighbor to the left or to the right of the other melody.
Both difference in area and geometric matching have been considered in the literature.

Our future work will concentrate on experiments with real music data to explore the behavior of these similarity measures. In the experiments, the definition in the scaling operation can be extended by considering a different way of increasing the lengths of the segments. For example, in an $\varepsilon$-scaling, instead of increasing $\varepsilon$ the lengths of all segments,  we could increase each segment by $\alpha \varepsilon$, where the parameter $\alpha\in [0,1]$ depends on the length of the segment.
For the compression problem, we plan to explore the accuracy, flexibility and speed of
of our techniques for clustering and classification tasks.


%
%
%

\vspace{2mm}
\noindent
{\bf Acknowledgment.}
This work was initiated at the VIII
Spanish Workshop on Geometric Optimization, El Roc\'io,
Huelva, Spain, June 20--24, 2016. We thank all
participants in the workshop, especially Jorge Urrutia, for helpful discussions
and their contribution to a creative atmosphere.

\bibliography{referencias}

\end{document}